\newtheorem{thm}{Theorem}
\DeclarePairedDelimiter{\ceil}{\lceil}{\rceil}
\begin{document}

\title{Fault-tolerant resource estimate for quantum chemical simulations: Case study on Li-ion battery electrolyte molecules}
\author{Isaac H. Kim}
\thanks{IK's contribution was carried out while he was affiliated with PsiQuantum. IK’s current affiliation is the Department of Computer Science, UC Davis, Davis, CA 95616}
\affiliation{PsiQuantum, Palo Alto}
\author{Eunseok Lee}
\email{eunseok.s.lee@gmail.com}
\affiliation{Mercedes-Benz Research and Development North America, Sunnyvale, CA 94085, USA}
\author{Ye-Hua Liu}
\affiliation{PsiQuantum, Palo Alto}
\author{Sam Pallister}
\affiliation{PsiQuantum, Palo Alto}
\author{William Pol}
\email{wpol@psiquantum.com}
\affiliation{PsiQuantum, Palo Alto}
\author{Sam Roberts}
\affiliation{PsiQuantum, Palo Alto}

\date{\today}

\begin{abstract}
We estimate the resources required in the fusion-based quantum computing scheme to simulate electrolyte molecules in Li-ion batteries on a fault-tolerant, photonic quantum computer. We focus on the molecules that can provide practical solutions to industrially relevant problems. Certain fault-tolerant operations require the use of single-qubit ``magic states" prepared by dedicated ``magic state factories” (MSFs). Producing and consuming magic states in parallel is typically a prohibitively expensive task, resulting in the serial application of fault-tolerant gates. However, for the systems considered, the MSF constitutes a negligible fraction of the total footprint of the quantum computer, allowing for the use of multiple MSFs to produce magic states in parallel. We suggest architectural and algorithmic techniques that can accommodate such a capability. We propose a method to consume multiple magic states simultaneously, which can potentially lead to an order of magnitude reduction in the computational runtime without additional expense in the footprint.
\end{abstract}

\maketitle

\section{Introduction}
Quantum computers are capable of carrying out computational tasks that are intractable for classical computers, such as integer factorization~\cite{Shor1994}, combinatorial optimization~\cite{Pagano2020}, simulation of interacting quantum many-body systems~\cite{Feynman1982,Lloyd1996}, and quantum chemistry~\cite{QChemistry} to name a few.

One of the most anticipated applications of quantum computers is first-principles quantum chemical simulation of complex molecules~\cite{Muller2015}. While such studies can be performed on classical computers as well, the computational cost typically scales exponentially unless one adopts certain assumptions and approximations~\cite{DFT_benchmark1,DFT_benchmark2}. Quantum computers are expected to enable quantum chemical simulation from first-principles --- with fewer assumptions and approximations --- and thus accurately model the properties of various molecules. In turn, such studies may lead to new insights into those molecules that are otherwise difficult to obtain.

Recently, there have been continued innovative developments in quantum algorithms for quantum chemistry~\cite{Poulin2014,Reiher2016,Babbush2018,Berry2019,DoubleFactorized_MSFT,Lee2020,QC_sulfur,gujarati2021heuristic,Tyler2020,first_quant}, which have led to many orders of magnitude reduction in the resources required to simulate molecules at accuracies beyond the reach of classical computation. These developments have been driven by a focus on a few molecular systems that take part in complex reactions that underlie high-impact open problems in chemistry (\emph{e.g.,} elucidating the mechanism of nitrogen fixation at FeMoco active sites~\cite{Reiher2016,Li_FeMoco}). Importantly, the cost reduction techniques found in these studies are expected to be applicable to other chemical systems as well.

Thus, it is of interest to understand the cost of these state-of-the-art algorithms when applied to a broader class of industrially relevant quantum chemistry problems and their corresponding molecular systems. In this article, we propose one such concrete application: the study of electrolyte molecules for Li-ion batteries. In conventional Li-ion batteries, the liquid electrolyte plays an important role in determining the performance and the stability of the batteries. Quantum chemical simulation of the underlying constituent molecules can help us better understand electrochemical reactions occurring in the liquid electrolytes, and serve as a useful guide in establishing design principles for better electrolytes.

Realizing the promise of quantum computation to accurately simulate complex chemical systems like Li-ion battery electrolyte chemistries will require devices capable of executing hundreds of billions of operations on thousands of logical qubits. Currently, a huge technological effort is underway to build a fault-tolerant quantum computer. As attention in the field turns toward engineering noise-resilient devices capable of performing such large computations, it is increasingly important to move beyond hardware-agnostic resource estimates in terms of qubits and gates, and account for the overhead of fault-tolerance. 

In that spirit, we first describe resource estimates in terms of architecture-agnostic parameters like the number of qubits and gates, followed by compilation of these hardware-agnostic parameters into the standard primitives of our fault-tolerant architecture, culminating in architecture-specific size and runtime estimates. The fully-compiled resource estimate is markedly different to prior studies, in that it makes use of the recently-introduced \emph{fusion-based quantum computation} (FBQC) scheme~\cite{FBQCpaper,Interleaving}.

FBQC is a quantum computing paradigm in which the fundamental building blocks of the computation are resource states; joint, entangling measurements called \emph{fusions}; and feed-forward operations. For a conceptual overview of FBQC, see Appendix \ref{sec:fusion}. A promising technology that can realize these operations in practice is photonics~\cite{FBQCpaper}. Physically, the resource states can be created using a \emph{resource state generator} (RSG), a gadget that emits the resource states encoded in a finite number of photons~\cite{FBQCpaper}. The requisite fusion measurements can be performed in a variety of ways using only linear optical elements and photon-detection~\cite{Browne2005,FBQCpaper}.

A salient feature of the photon-based FBQC scheme is the possibility of \emph{interleaving}~\cite{Interleaving}. Photons can be stored in an optical delay line of large length without significantly degrading the quality of the photonic qubits. This leads to a compelling ability to easily trade off between the footprint of the device and the runtime of the computation, potentially leading to a more modest engineering overhead in building a large-scale quantum computer. Naturally, the overall cost of the algorithm will greatly depend on the length of the delay line.

\begin{figure*}[t]
	\includegraphics[width=0.45\textwidth]{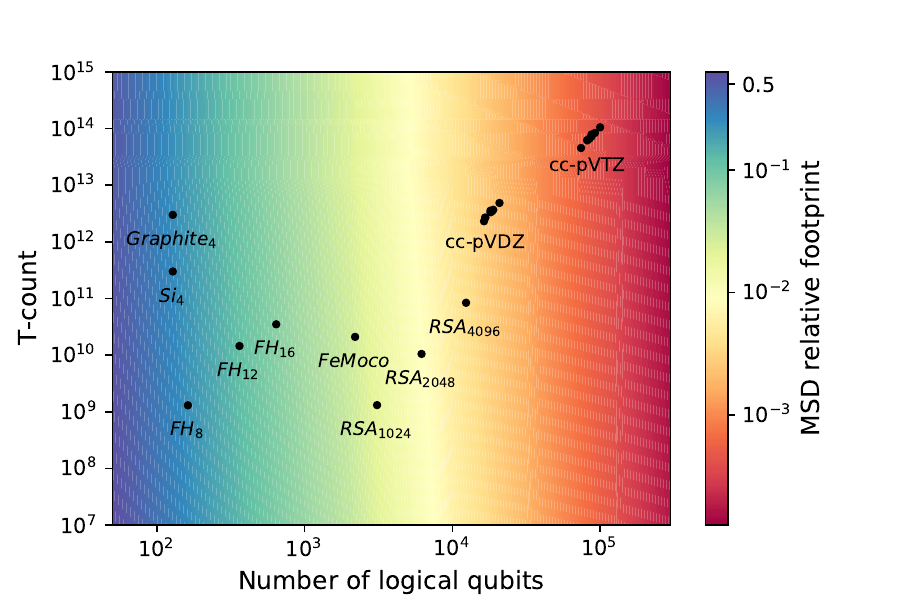} \quad
	\includegraphics[width=0.52\textwidth]{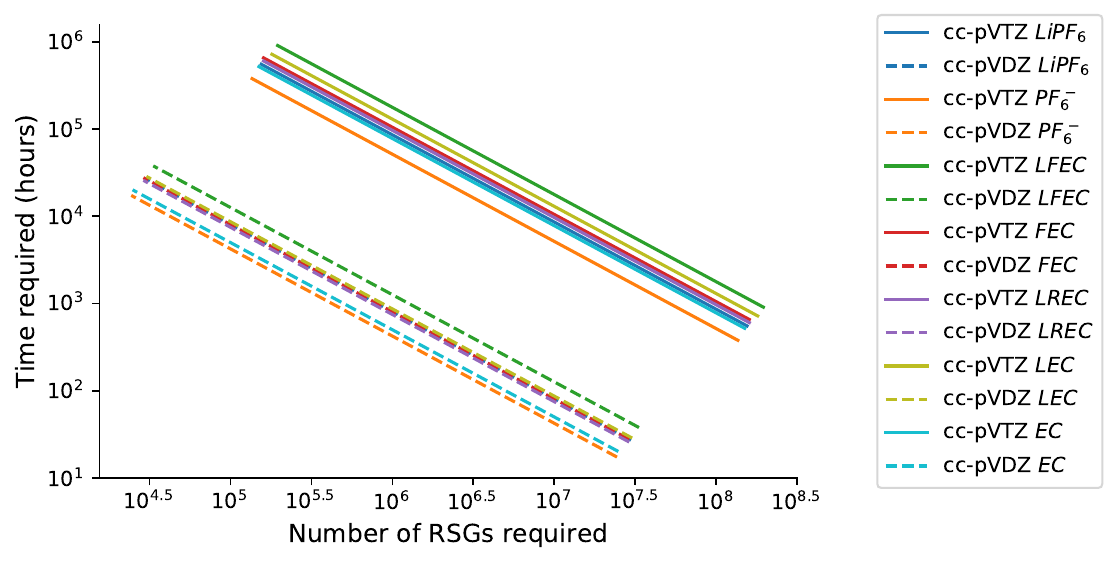}
	\caption{(left) Ratio of magic state distillation (MSD) footprint to total computational footprint for different numbers of logical qubits and T-count. Footprint is measured in terms of number of RSGs required. We assume a linear data block with two multi-level $15$-to-$1$ factories as depicted in Fig. \ref{fig:arch_schematic}. For comparison, we plot the resource estimates for other algorithms, such as: simulation of the Fermi-Hubbard model \cite{campbell2020}; of crystalline materials \cite{Kivlichan2020}; of FeMoco \cite{Lee2020}; and breaking RSA encryption \cite{gidney2019RSA}. Where necessary, resource estimates for quantum chemistry algorithms have been amended to produce eigenenergies to within chemical accuracy. We assume logical error rate parameters $A=0.45$, $B = 1.35$ (i.e. the average of the two regimes considered in Section~\ref{sec:overhead}), with magic states initially prepared with a logical Pauli error rate of $0.1\%$. (right) Footprint and time estimates to perform fault-tolerant quantum computations for various molecules in the cc-pVTZ and cc-pVDZ bases based on Table~\ref{table:double_tcount}. Using interleaving, we can linearly trade-off space and time resources, as displayed. We plot resource estimates for a range of interleaving ratios between $1$  and $1000$, as such ratios can be achieved introducing negligibly higher error rates~\cite{Interleaving}. We assume logical error rate parameters $A=0.5$, $B = 1.6$, with magic states initially prepared with a logical Pauli error rate of $0.1\%$.}
	\label{fig:MSD_footprint_data}
\end{figure*}
Taking into account the distinctive features of FBQC, we perform a detailed analysis of the resources required to simulate a variety of molecules relevant to Li-ion electrolyte chemistry on a fault-tolerant quantum computer. The result of this analysis is succinctly summarized in Fig. \ref{fig:MSD_footprint_data}.

The scope of this article is really two-fold: after first providing concrete, fully-compiled resource estimates on the cost of simulating the aforementioned chemical systems, we study the degree to which considerations of parallelization at both the algorithmic and architectural level can reduce the overall resource requirements to perform these algorithms. Certain fault-tolerant operations require the use of single-qubit ``magic states" prepared by dedicated ``magic state factories” (MSFs)~\cite{Bravyi2005}. According to our resource estimates, having an MSF that produces magic states serially is suboptimal for simulating the systems under consideration. For small quantum algorithms (\emph{i.e.,} few qubits and gates), the footprint of the factory constitutes a substantial fraction of the quantum computer; see Ref.~\cite{Gidney2019efficientmagicstate,litinski2019game,litinski2019magic} for recent attempts to optimize the factory. However, as the number of logical qubits needed for the algorithm increases, the relative size of the MSF will become smaller. In that regime, the (relative) extra cost of adding another magic state factory will be small; while it has been known that this will happen eventually as the number of logical qubits increases (for instance, recent studies~\cite{Lee2020,gidney2019RSA} have considered multiple MSFs in their estimates), the exact point where this crossing occurs has remained unaddressed in the literature. We fill this gap by performing a detailed study; the answer, which depends on the number of qubits and the T-gate count, is plotted in Fig. \ref{fig:MSD_footprint_data}. Our study suggests that there are quantum algorithms with practical applications that lie outside the oft-assumed regime in which the size of the MSF is relatively large.

For these ``intermediate''-sized fault-tolerant quantum computations, it is beneficial to use multiple MSFs, and to use them efficiently. We provide algorithmic and architectural adjustments that can accommodate the injection of multiple distilled magic states whilst making the Clifford gate count relatively small. On the algorithmic front, we perform a detailed analysis on the cost of parallelizing some of the key subroutines in quantum chemistry simulation algorithms. While many of the results we summarize in this paper are well-known, some are new. For example, we introduce a novel method to apply a specific sequence of Givens rotations used in fermionic basis change (\emph{i.e.,} ``Gizens'' rotations) in logarithmic depth (a proof is provided in Appendix \ref{sec:gizens_proof}). On the architectural side, our work builds upon the scheme of Litinski~\cite{litinski2019game}, which is an architecture that can implement an arbitrary quantum algorithm using $n_T$ T-gates in $\mathcal{O}(n_Td)$ time, where $d$ is the code distance of the individual logical qubits, independent of the number of Clifford gates in the algorithm. We propose a further optimization of this architecture, leading to a computation time of $\mathcal{O}(n_Td/m)$ for a variable parameter $m$ while maintaining the code distance without increasing the footprint, assuming that (i) $m$ is small compared to $d$ and (ii) that the circuit is structured in such a way that $m$ T-gates can be applied simultaneously in the same time step.

We emphasize that while the explicit resource estimates of RSG count and runtime only apply to the FBQC paradigm, the results in other sections (particularly on the size of the magic state factories) apply more generally to \textit{any} surface code-based architecture. The parallelization techniques detailed in this article may reduce the runtime requirements by an order of magnitude; however, while the numerical results were based on a Monte Carlo simulation, the performance of our new architectural proposal that uses multiple MSFs has not been simulated in detail. We leave this to future studies.

\section{Materials and Methods}
\label{sec:modeling}
We first describe the Li-ion battery chemistry use-case and the molecules under consideration and detail the classical computational pipeline required to specify the problem and the quantities of interest, then we describe the quantum algorithms we use to calculate the quantities of interest and quantify the resources required to implement these algorithms using a fault-tolerant cost model. We then detail how we convert these hardware-agnostic estimates into architecture-specific size and runtime estimates.

\subsection{Battery chemistry use-case}
\label{sec:chem_background}
In conventional Li-ion batteries, the liquid electrolyte plays an important role in determining the performance and stability of the batteries~\cite{Goodenough2010,Xu2004,Weber2019}. Electrolytes consist of three components; solvent molecules like ethylene carbonate (EC), conducting-salt molecules like LiPF$_6$, and additive molecules like fluoroethylene carbonate (FEC). The solvent is the host for the salt and the additive, and thus needs to have high solvation, low viscosity, inertness, and safety. Conducting-salts dissociate into ions in the solvent, and these ions carry electricity and thus need to have sufficient solubility into solvent and high stability against side reactions with other chemical components. Additives are substances added to electrolytes to enhance specifically targeted properties, and need to provide durable performance in the long term.

Quantum chemical simulation can be used to understand the electrochemical reactions that occur in these constituent molecules, and thus aid in designing better electrolytes~\cite{Urban2016,Wang2018,Zhang2006}. As an example, let us assume that we aim to elaborate the effect of additives to Li-ion solvation~\cite{Im2017}. Although the entire process involves several intermediate steps and may evolve along different reaction paths, the Li-ion solvation ultimately requires Li to dissociate from salt molecules. Quantum chemical simulation can be used to calculate the dissociation energy of the salt molecules in two different environments: one with solvent-only, and another with solvent as well as additives. In the classical chemical regime, we are interested in determining the \textit{reaction rate}, a quantitative proxy for the relative dominance of two chemical reactions. This quantity is proportional to the Boltzmann factor of the change in enthalpy of the two reactions. Thus, the calculated Li-dissociation energies can be used to determine which environment is more favorable for both the Li-dissociation as well as the solvation, quantitatively.

While quantum chemical simulation can be carried out from first-principles (solving Schr\"odinger's equation without approximations), it is common to adopt approximations and assumptions because of the high computational expense originating from the exponentially-scaling complexity in a first-principles approach. Although such approximations and assumptions have been practically applied to a variety of chemical studies at a compromised accuracy, there is a fundamental limit in both the accuracy of simulation results and the spectrum of chemical systems that can be simulated. In particular, as the focus of chemical studies continues to shift to smaller-scales and evolves to include more comprehensive descriptions of the environments the chemical systems under study inhabit, demand for more precise simulations consistently increases. Quantum computers are believed to mitigate the exponentially-scaling computational complexities in certain problems and thus enable quantum chemical simulations in a full configuration-interaction scheme.

In this study, we take EC, LiPF$_6$, and FEC to represent the solvent, conducting-salt, and additive molecules, respectively. We estimate the computational resources required to study their interactions with Li ions via quantum chemical simulation. Our proposed scenario is to perform the geometric optimization using density functional theory (DFT) calculations classically, and then calculate the energy via quantum computation. We consider quantum phase estimation (QPE) for the quantum part of the computation; in particular, we consider recently developed quantum algorithms~\cite{DoubleFactorized_MSFT,Lee2020}.

\subsubsection{Chemical accuracy}
Here we comment on the required accuracy for the energy calculations. Ultimately, the quantity we are really interested in is the reaction rate, which is proportional to $e^{-\beta \Delta E}$, where $\beta$ is the inverse temperature and $\Delta E = E_1 - E_2$ is the difference between two energies $E_1$ and $E_2$. When the uncertainty in $\Delta E$ is $\approx 1.36$\;kcal/mol at room temperature ($\beta^{-1} \approx 0.5922$\;kcal/mol), computed reaction rates and experimentally determined reaction rates differ by an order of magnitude. For this reason, the desired so-called ``chemical" accuracy for calculating $\Delta E$ is historically taken to be approximately $1$\;kcal/mol~\cite{nobel_chem}. We further note that while one would like to calculate $\Delta E$, we cannot calculate this difference directly; the algorithm at our disposal (QPE) computes \textit{individual} energies $E_1$ and $E_2$, \textit{not} $\Delta E$. To achieve chemical accuracy in the calculation of $\Delta E$, but given only the ability to calculate individual energies, one must in principle compute $E_1$ and $E_2$ to chemical accuracy due to propagation of errors. For instance, suppose $E_{2_\text{calc}} = E_{2_\text{true}}+ E_{2_\text{error}}$ at the initial state of a chemical reaction and $E_{1_\text{calc}} = E_{1_\text{true}}+ E_{1_\text{error}}$ at the transition state. Then, the energy difference is calculated to be $E_{2_\text{calc}} - E_{1_\text{calc}} = E_{2_\text{true}}+ E_{2_\text{error}} - E_{1_\text{true}} -  E_{1_\text{error}}$. It is common to assume that $E_{2_\text{error}}$ is similar to $E_{1_\text{error}}$, and so cancel them out; this is a reasonable assumption to make for some molecules (mostly oxides and fluorides) but it is not for the other molecules. While our battery chemistry use-case may not explicitly require the calculation of the individual energies to chemical accuracy, as we mention above, we aim to push the precision achievable with computational methods via quantum computation to match (or even exceed) experimental accuracy, and so we opt to calculate each individual energy to great accuracy, namely $1$\;mHartree. 

\subsubsection{Active spaces}
Here we note that unlike most studies on quantum chemical simulation via quantum computation, we do not consider any active spaces, \emph{i.e.}, we consider all electrons, including core electrons. We note that the core electrons are not absolutely necessary to capture the chemistry we are interested in. Further, it would not necessarily be appropriate to choose an active space since these molecules do not exhibit strong correlations. However, the aim is to relax several assumptions and approximations typically made in classical computational methods, and to estimate the cost of QPE when pushing the boundary of quantum chemical simulations for molecules - applying some of the hardest conditions for the simulability of important quantum chemical problems (i.e. including all electrons, including core electrons that one may not typically treat).

\subsection{Molecules and computational description}
\label{sec:comp_description}
Here we briefly describe the computational specification of the electrolyte molecules under study. The structures of EC, PF$_6^-$, and FEC molecules mentioned above in Section \ref{sec:chem_background} are illustrated in Fig. \ref{fig:molecules}. Note that PF$_6^-$ is in an anodized state while the others are in a neutral state. In addition to these molecules, their variants with additional Li attached (LEC, LiPF$_6$, and LFEC, respectively) and EC-variant with additional Li substituted (LREC) are also examined considering potential reactions of electrolyte molecules with Li-ions. The geometric optimization of these molecules is processed via a Density Functional Theory (DFT) calculation to obtain the optimal positions of the atoms. The generalized gradient approximation is applied using the Perdew-Burke-Ernzerhof parametrization~\cite{PBE1996}, as implemented in the Vienna Ab-initio Software Package~\cite{VASP1,VASP2,VASP3,VASP4}. A cutoff energy of $520$\;eV is used and the $k$-point mesh was adjusted to ensure convergence of $1$\;meV per atom. The molecules are placed in an empty $15 \times 15 \times 15(\text{\r{A}}^3)$ supercell, and the volume and shape of the supercell are fixed during the relaxation. See Tables \ref{table:geometry_one} and \ref{table:geometry_two} for the atomic coordinates of these molecules after geometric optimization. Note that the plane wave basis set is used only for the purpose of the geometric optimization. When the computational resources required for the quantum computation are estimated, Gaussian-type basis sets are considered.

\begin{table}[h]
	\renewcommand{\arraystretch}{1.15}
	\begin{tabular}{c c | c | c | c}
		Molecule /& Atom & x & y & z \\
		\hline
		\textbf{EC} & O &  -1.5282  &  -1.8372 &  -1.3201 \\
		& O   & -1.2811  &  0.2183  & -0.3662 \\
		&O   & 0.4549 &  -1.2114 &  -0.3859 \\
		&C   & -0.8513 &   -1.0235 &  -0.7469 \\
		&C  & -0.2205 &   0.9382  &  0.2895 \\
		&C  &  0.9440 &  -0.0685  &  0.3405 \\
		&H  & -0.5731 &   1.2401 &    1.2848 \\
		&H  &  0.0094 &   1.8361 &  -0.3014 \\
		&H  &  1.1929 &  -0.3884  &  1.3622 \\
		&H   & 1.8529 &   0.2962 &  -0.1565 \\
		\hline
		\textbf{LEC} & O  &  0.4600  & -1.1954 &  -0.3751 \\
		&O  & -1.2683  &  0.2344 &  -0.3527 \\
		&O &  -1.5180  & -1.8063 &  -1.2870\\
		&C   & 0.9670&   -0.0350  &  0.3631\\
		&C  & -0.1914 &   0.9686  &  0.3147\\
		&C  & -0.8205 &  -0.9776 &  -0.7215\\
		&H  &  1.8740 &   0.3072 &  -0.1501\\
		&H  &  1.2114 &  -0.3775  &  1.3780\\
		&H  &  0.0215&    1.8593 &  -0.2924\\
		&H  & -0.5661  &  1.2627 &   1.3024\\
		&Li  &-2.3899   &-3.0578&   -2.2421 \\
		\hline
		\textbf{LREC} & O &   0.4948 &  -1.0310 &  -0.6892 \\
		&O  & -1.2021  &  0.1934 &   0.1259\\
		&O &  -1.6545 &  -1.6749  & -1.1148\\
		&C &   1.0688 &  -0.0436  &  0.2773\\
		&C  &  0.0565&    1.0553 &   0.3496\\
		&C &  -0.8519 &  -0.9225  & -0.6173\\
		&H  &  2.0502 &   0.2691 &  -0.1020\\
		&H &   1.2012 &  -0.6203 &   1.2084\\
		&H   & 0.1129   & 1.6555 &  -0.5769\\
		&Li & -1.3938  &  1.2441  &  1.6320\\
		\hline
		\textbf{FEC} & F  &  2.0238  &  0.4340 &  -0.2441\\
		&O &   0.4384 &  -1.2159  & -0.3308\\
		&O&   -1.3137 &   0.2050 &  -0.3844\\
		&O &  -1.4938 &  -1.8355 &  -1.3896\\
		&C &   0.8931  & -0.0880 &   0.3799\\
		&C &  -0.2719  &  0.9089 &   0.3131\\
		&C  & -0.8619  & -1.0207 &  -0.7745\\
		&H &   1.2057 &  -0.3949 &   1.3879\\
		&H  & -0.6329 &   1.2071  &  1.3059\\
		&H  &  0.0132  &  1.8001 &  -0.2631\\
	\end{tabular}
	\caption{The atomic coordinates (in \AA) of the molecules obtained from geometric optimization.}
	\label{table:geometry_one}
\end{table}

\begin{table}[h]
	\renewcommand{\arraystretch}{1.15}
	\begin{tabular}{c c | c | c | c}
		Molecule /& Atom & x & y & z \\
		\hline
		\textbf{LFEC} & F   & 2.0075   & 0.4516&   -0.2684\\
		&O&    0.4360 &  -1.2149&   -0.2662\\
		&O &  -1.2967  &  0.2398 &  -0.3826\\
		&O&   -1.4857&   -1.8199 &  -1.3620\\
		&C  &  0.9015  & -0.0702 &   0.4098\\
		&C &  -0.2574  &  0.9271  &  0.3467\\
		&C  & -0.7909  & -0.9376 &  -0.8490\\
		&H  &  1.2542 &  -0.3672  &  1.4079\\
		&H  & -0.6501  &  1.2159    &1.3298\\
		&H   & 0.0562  &  1.8125  & -0.2247\\
		&Li  &-2.3950 &  -3.0545 &  -2.2039\\
		\hline
		\textbf{PF$_6^-$} & P  &  0.0000   & 0.0000  &  0.0000\\
		&F   & 1.6258  &  0.0000   & 0.0000\\
		&F &  -1.6258   & 0.0000 &   0.0000\\
		&F  &  0.0000  &  1.6258  &  0.0000\\
		&F  &  0.0000  & -1.6258  &  0.0000\\
		&F  &  0.0000  &  0.0000  &  1.6258\\
		&F  &  0.0000  &  0.0000  & -1.6258\\
		\hline
		\textbf{LiPF$_6$} & P  & -2.9718  &  0.0296  & -0.0768\\
		&F &  -2.7478  &  1.4716  & -0.8105 \\
		&F  & -2.7395  & -0.5287  & -1.5942\\
		&F  & -4.0257 &  -1.1980  &  0.0996\\
		&F &  -1.7685 &  -1.2847  &  0.3039\\
		&F &  -1.7778   & 0.7455  &  1.0998\\
		&F  & -4.0348  &  0.8028  &  0.8833\\
		&Li &  -0.5410&   -0.4783  &  1.2492\\
	\end{tabular}
	\caption{The atomic coordinates (in \AA) of the molecules obtained from geometric optimization.}
	\label{table:geometry_two}
\end{table}

\begin{figure}[h]
  \includegraphics[width=0.9\columnwidth]{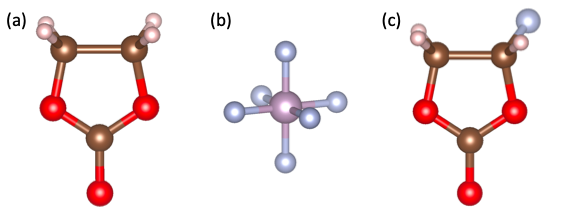}
  \caption{Illustration of three molecules studied in this study – (a) EC, (b) PF$_6^-$, (c) FEC. Red, brown, pink, violet, and blue spheres correspond to oxygen, carbon, hydrogen, phosphorus, and fluorine atoms, respectively. \label{fig:molecules}}
\end{figure}

\subsection{Molecular Hamiltonian}
\label{sec:molecular_hamiltonian}
Once the geometric optimization is complete, we obtain a fermionic Hamiltonian in second-quantized form, which can be written as:
\begin{equation}
	H = \sum_{i,j,k,l,\sigma,\sigma'} h_{ijkl}
	a_{i\sigma}^{\dagger}a_{j\sigma}a_{k\sigma'}^{\dagger}a_{l\sigma'}
	+ \sum_{i,j,\sigma} t_{ij} a_{i\sigma}^{\dagger}a_{j\sigma},
	\label{eq:fermionic_hamiltonian}
\end{equation}
where $h_{ijkl}$ and $t_{ij}$ are real numbers, $\sigma,\sigma'\in\{\uparrow,\downarrow\}$ are spins, and $i, j, k,$ and $l$ are the indices of the molecular orbitals. Here $a_{i\sigma}^{\dagger}$ and $a_{i\sigma}$ are fermion creation/annihilation operators, obeying the canonical anticommutation relation:
\begin{equation}
	\{a_{i\sigma}^{\dagger}, a_{j\sigma'} \} = \delta_{ij}\delta_{\sigma\sigma'}.
\end{equation}
Note that molecular orbitals are expressed by the basis set. In this study, STO-3G, DZ, 6-311G, cc-pVDZ, and cc-pVTZ basis sets are considered to represent various accuracies. We also assume the simulations are carried out in an open-shell scheme if the molecular system to be simulated contains unpaired electrons.

In Ref.~\cite{DoubleFactorized_MSFT}, the two-electron tensor $h_{ijkl}$ over $N$ orbitals is factorized by performing two levels of eigendecompositions, as follows:
\begin{equation} \label{eq:eigendecomp_of_h}
  h_{ijkl} = \sum_r L_{ij}^{(r)} L_{kl}^{(r)},
\end{equation}
where
\begin{equation} \label{eq:eigendecomp_of_L}
  L^{(r)} = \sum_m \lambda_m^{(r)} \overrightarrow{R_m^{(r)}} \cdot \left(\overrightarrow{R_{m}^{(r)}} \right)^T.
\end{equation}
Eq. \eqref{eq:eigendecomp_of_h} demonstrates a rank-$R$ factorization of $h_{ijkl}$ into two $N \times N$ symmetric matrices $L^{(r)}$ computed via eigendecomposition (where the eigenvalues are absorbed into the matrices), and Eq. \eqref{eq:eigendecomp_of_L} demonstrates a further rank-$M$ factorization of each $L^{(r)}$ matrix into vectors $\overrightarrow{R_m^{(r)}}$ and eigenvalues $\lambda_m^{(r)}$ via eigendecomposition. Here the summation ranges for $r \in [R]$ and $m \in [M]$ can be truncated by removing terms with small norms while keeping the total error smaller than the chemical accuracy. The result of the calculation is summarized in Table \ref{table:double_factorization}. We note that the values for $R$, $M$, and the one-norm $\alpha$ calculated for each molecule and basis set considered in this article are larger than those provided in Refs.~\cite{DoubleFactorized_MSFT,Lee2020} for FeMoco.

\begin{table*}[t]
\renewcommand{\arraystretch}{1.16}
  \begin{tabular}{c c |c|c|c|c|c}
    & & $N$ & $R$ & $M$ & $\alpha$ & $\max_r[M^{(r)}]$ \\
    \hline
    STO-3G & EC & 34 & 176 & 4493 & $5.29\times 10^2$ & 34\\
    & LEC & 39 & 211 & 6075 & $5.68\times 10^2$ & 39 \\
    & LREC & 38 & 210 & 6337 & $5.60\times 10^2$ & 38 \\
    & FEC & 38 & 198 & 5886 & $6.89 \times 10^2$ & 38 \\
    & LFEC & 43 & 231 & 7640 & $7.39 \times 10^2$ & 43 \\
    & PF$_6^-$ & 39 & 171 & 4352 & $1.19\times 10^3$ & 39 \\
    & LiPF$_6$ & 44 & 215 & 6964 & $1.27\times 10^3$ & 44 \\
    \hline
    DZ & EC & 68 & 540 & 29369 & $1.54\times 10^3$ & 68\\
    & LEC & 78 & 626 & 39538 & $1.77\times 10^3$ & 78 \\
    & LREC & 76 & 623 & 38452 & $1.77\times 10^3$ & 76 \\
    & FEC & 76 & 605 & 37345 & $1.94 \times 10^3$ & 76 \\
    & LFEC & 86 & 694 & 48626 & $2.21 \times 10^3$ & 86 \\
    & PF$_6^-$ & 78 & 572 & 26649 & $2.64\times 10^3$ & 78 \\
    & LiPF$_6$ & 88 & 689 & 45838 & $2.99\times 10^3$ & 88 \\
    \hline
    6-311G & EC & 90 & 768 & 56708 & $3.15\times 10^3$ & 90\\
    & LEC & 103 & 887 & 74916 & $3.65\times 10^3$ & 103 \\
    & LREC & 100 & 871 & 72017 & $3.60\times 10^3$ & 100 \\
    & FEC & 100 & 845 & 70235 & $3.95 \times 10^3$ & 100 \\
    & LFEC & 113 & 968 & 90140 & $4.50 \times 10^3$ & 113 \\
    & PF$_6^-$ & 99 & 745 & 45889 & $4.80\times 10^3$ & 99 \\
    & LiPF$_6$ & 112 & 889 & 76751 & $5.45\times 10^3$ & 112 \\
    \hline
   cc-pVDZ & EC & 104 & 959 & 83523 & $3.94\times 10^3$ & 104\\
    & LEC & 118 & 1105 & 109293 & $4.51\times 10^3$ & 118 \\
    & LREC & 113 & 1064 & 101529 & $4.29\times 10^3$ & 113 \\
    & FEC & 113 & 1040 & 99000 & $4.66 \times 10^3$ & 113 \\
    & LFEC & 127 & 1188 & 126748 & $5.29 \times 10^3$ & 127 \\
    & PF$_6^-$ & 102 & 809 & 58093 & $3.97\times 10^3$ & 102 \\
    & LiPF$_6$ & 116 & 978 & 91773 & $4.65\times 10^3$ & 116 \\
   \hline
      cc-pVTZ & EC & 236 & 2454 & 486578 & $2.70\times 10^4$ & 236\\
    & LEC & 266 & 2770 & 619960 & $3.08\times 10^4$ & 266 \\
    & LREC & 252 & 2631 & 557492 & $2.87\times 10^4$ & 252 \\
    & FEC & 252 & 2618 & 554546 & $3.12 \times 10^4$ & 252 \\
    & LFEC & 282 & 2936 & 696150 & $3.54 \times 10^4$ & 282 \\
    & PF$_6^-$ & 214 & 2009 & 315062 & $2.41\times 10^4$ & 214 \\
    & LiPF$_6$ & 244 & 2369 & 476246 & $2.84\times 10^4$ & 244 \\
    \hline
  \end{tabular}
  \caption{The result of the double factorization. Truncation error is set to 1mH. Here $N$ is the number of orbitals in the basis set and $\alpha$ is the norm of the Hamiltonian after the truncation. For the definition of $R$ and $M$, see Section \ref{sec:molecular_hamiltonian}. \label{table:double_factorization}}
\end{table*}

\subsection{Quantum algorithm description}
\label{sec:algo_description}
The main goal of the quantum component of the computation is to estimate the ground state energy of Eq.~\eqref{eq:fermionic_hamiltonian} for each molecule and basis set. The dominant approach in the literature is to use quantum phase estimation (QPE), which, given an eigenstate of the Hamiltonian describing a system, $H$, and a unitary encoding of a function of $H$, produces the corresponding eigenenergy.

For the unitary input, we choose a modern method known as \textit{qubitization}~\cite{Qubitization2019}. The dominant part of the quantum computation lies in the qubitization subroutine, and as such, optimizing this part of the computation is crucial (see Appendix \ref{sec:block_encoding_strategy}). Recently, methods to utilize the sparsity structure of the tensor $h_{ijkl}$ have shown great promise~\cite{Berry2019,DoubleFactorized_MSFT,Lee2020}. These developments have led to several orders of magnitude reduction in the computational cost for studying the FeMoco Hamiltonian, the oft-cited ``poster-child'' high-impact application of quantum computation~\cite{Reiher2016}. Our work builds upon these important recent developments.

Specifically, we shall adopt an approach taken in Ref.~\cite{DoubleFactorized_MSFT} to estimate the computational cost of estimating the ground state energy of the molecules described above. We note that other methods, \emph{i.e.,} the ones described in Ref.~\cite{Berry2019,Lee2020} may lead to a lower overall computational cost. We leave those studies for future work. 

\subsubsection{Ground state approximation}
One may be concerned with the necessity to prepare an eigenstate of the Hamiltonian as input to QPE; indeed, in practice, for most molecules of interest it is infeasible to prepare an exact eigenstate of $H$, and so one must instead prepare a trial ansatz wavefunction with non-negligible overlap with the true eigenstate (and in particular, the ground state). Concerns over the viability of preparing such an ansatz efficiently are encapsulated by the ``orthogonality catastrophe", the observation that the overlap between the true ground state and some ansatz wavefunction decreases exponentially as the system size increases~\cite{kohn}. However, it has been shown that there are sophisticated classical methods for preparing trial wavefunctions with sufficient overlap, for states of up to $\mathcal{O}(100)$ orbitals using simple-to-prepare states such as the single Slater determinant obtained from the Hartree-Fock method (alternatively, methods for multi-determinant state preparation can be used)~\cite{state_prep}. It should be noted that the resource counts for the number of qubits used in the algorithm presented in this article (shown in Table \ref{table:double_tcount}) include not only the system qubits whose state represents the wavefunction of the physical system, but also a plethora of ancillae. This state preparation routine only needs to act on the small subset of qubits that represent the system, which is only the number of orbitals (only $\mathcal{O}(10)-\mathcal{O}(100)$ for all molecules and basis sets; see Table \ref{table:double_factorization}).

\begin{table}[h]
\renewcommand{\arraystretch}{1.2}
  \begin{tabular}{c c |c|c}
    & & Number of logical qubits & T-count \\
    \hline
    STO-3G & EC & $2685$ & $6.32\times10^{10}$ \\
    & LEC & $3048$ & $8.07\times10^{10}$ \\
    & LREC & $2976$ & $7.95\times10^{10}$ \\
    & FEC & $2981$ & $9.56\times10^{10}$ \\
    & LFEC & $3342$ & $1.20\times10^{11}$ \\
    & PF$_6^-$ & $3137$ & $1.57\times10^{11}$ \\
    & LiPF$_6$ & $3507$ & $2.07\times10^{11}$ \\
    \hline
    DZ & EC & $10462$ & $5.41\times10^{11}$ \\
    & LEC & $12277$ & $7.57\times10^{11}$ \\
    & LREC & $11657$ & $7.25\times10^{11}$ \\
    & FEC & $11968$ & $8.01\times10^{11}$ \\
    & LFEC & $13509$ & $1.08\times10^{12}$ \\
    & PF$_6^-$ & $12280$ & $1.00\times10^{12}$ \\
    & LiPF$_6$ & $13817$ & $1.45\times10^{12}$\\
    \hline
    6-311G & EC & $14492$ & $1.69\times10^{12}$\\
    & LEC & $16548$& $2.37\times10^{12}$\\
    & LREC & $16073$ & $2.26\times10^{12}$\\
    & FEC & $16073$ & $2.45\times10^{12}$\\
    & LFEC & $18129$ & $3.31\times10^{12}$\\
    & PF$_6^-$ & $15912$ & $2.54\times10^{12}$\\
    & LiPF$_6$ & $18423$ & $3.79\times10^{12}$ \\
    \hline
   cc-pVDZ & EC & $16698$ & $2.68\times10^{12}$ \\
    & LEC & $18918$ & $3.69\times10^{12}$ \\
    & LREC & $18130$ & $3.32\times10^{12}$ \\
    & FEC & $18129$ & $3.57\times10^{12}$ \\
    & LFEC & $20855$ & $4.87\times10^{12}$ \\
    & PF$_6^-$ & $16382$ & $2.32\times10^{12}$ \\
    & LiPF$_6$ & $18600$ & $3.49\times10^{12}$ \\
   \hline
      cc-pVTZ & EC & $81958$ & $6.25\times10^{13}$ \\
    & LEC & $92345$ & $8.37\times10^{13}$ \\
    & LREC & $87490$ & $7.26\times10^{13}$ \\
    & FEC & $87498$ & $7.87\times10^{13}$\\
    & LFEC & $100139$ & $1.06\times10^{14}$ \\
    & PF$_6^-$ & $74348$ & $4.55\times10^{13}$ \\
    & LiPF$_6$ & $84721$ & $6.65\times10^{13}$ \\
    \hline
  \end{tabular}
  \caption{T-count $n_T$ and qubit count $n_L$ for each molecule and basis set that minimizes the computational volume $n_T \times n_L$. Note that these counts assume the serial application of magic states and do \emph{not} make use of any of the parallelization techniques discussed in Section~\ref{sec:official_result}. \label{table:double_tcount}}
\end{table}

\subsection{Fault-tolerant quantum algorithm cost model}
\label{sec:ft_algo_cost_model}
The cost of a quantum computation depends not only on the algorithm used, but also on the gate set used in the quantum computer and the cost of each gate. We consider the Clifford + T gate set. For our purposes, it will suffice to quantify the cost of the algorithm in terms of the following three quantities: (i) T-count, $n_T$; (ii) T-depth, $D_T$; and (iii) the number of logical qubits, $n_L$. T-count is the number of T-gates used in the quantum circuit, T-depth is the number of layers of commuting T-gates, and $n_L$ is the number of logical qubits that appear in the circuit description of the algorithm. We treat the cost of Clifford gates to be effectively equal to zero. As detailed in Appendix \ref{sec:compilation}, after compiling the quantum circuit in a certain way, the requisite cost of the Clifford gates will be negligible.

The parameters used to determine the scaling of these quantities depend greatly on the Hamiltonian pre-processing method used. In our case using the double factorization procedure (detailed in \ref{sec:molecular_hamiltonian}), these quantities can be determined by four parameters. These are (i) the number of orbitals, $N$; (ii) the ranks $M$ and $R$ that appear in the truncation procedure, and (iii) the norm of the Hamiltonian after the truncation, $\alpha$. See Table \ref{table:double_factorization} for the calculated values of these parameters for the molecules introduced above.

\textbf{Computational volume} When considering the overall cost of an algorithm, it is useful to consider the overall \emph{computational volume} of an algorithm; that is, a proxy for the space required to encode the problem of interest, multiplied by the time it takes to execute the algorithm. The space roughly corresponds to the number of logical qubits $n_L$, whereas the runtime roughly corresponds to the number of time slices needed to perform the gates in the computation. We typically define the overall volume to be $V_n \coloneqq n_L n_T$. However, in circumstances where magic state factories occupy a small percentage of the overall footprint, thus allowing us to potentially increase the number of magic state factories on the quantum computer, we may instead consider $V_D \coloneqq n_L D_T$. Viewed in this way, $V_n$ sets an upper bound on the volume of the algorithm as it assumes the \emph{serial} application of T gates, and $V_D$ sets a lower bound on the volume of the algorithm assuming that one can perform multiple T gates in parallel.

\subsection{Fault-tolerant overheads}
\label{sec:methods_overhead}
Here we detail the method by which we convert the algorithmic-level costs into architecture and hardware-level costs. We take into account the overhead due to quantum error correction and fault-tolerance, and detail the metrics and primitives specific to our fault-tolerance approach. We utilize the recently-introduced \textit{fusion-based quantum computing} (FBQC) ~\cite{FBQCpaper}, a universal model of quantum computation particularly suited to photonic quantum computation (for a brief introduction, see Appendix \ref{sec:fusion}). For the computations detailed in this article, we use FBQC to implement the more familiar surface-code type topological quantum error correction, and take our fault-tolerant operations to be lattice surgery operations on patches of surface-code encoded qubits (as we note in Section \ref{sec:generality}, many of the methods discussed here and the results that follow are general in nature, and are not features native exclusively to the FBQC paradigm).

\textbf{Fault-tolerance overhead metrics} The primitive building blocks for a fusion-based quantum computer are \emph{resource-state generators} (RSGs), devices that generate resource states on a fixed clock rate $f_{RSG}$. For simplicity, we consider resource states made up of 6 qubits in a ring called the ``6-ring" fusion network, which has been shown to be able to perform surface-code type error correction~\cite{FBQCpaper}. The relevant metric for the size of the computation is the number of resource state modules $n_{RSG}$ required and the number of RSG cycles $n_{cycles}$ required to implement the quantum algorithm of interest. From $n_{cycles}$ we can determine the total algorithm time $t_{algo}$ by dividing by the clock rate $f_{RSG}$.

To estimate the FT overhead, the algorithmic parameters we require are the T-gate count $n_T$ and the (logical) qubit count $n_L$. We set a tolerable failure rate for the entire computation $\epsilon_{\text{total}}$ – this is the rate at which the computation will fail due to errors in the device. We assume here that $\epsilon_{\text{total}}=1\%$ is sufficient, as the computation can be rerun for increased certainty as required. For example, since the required logical error rate is well below $10^{-10}$, an order of magnitude change in the target error makes at most a $10\%$ change in the code distance, which does not affect the estimates by much. From this we determine the tolerable error rate per gate $\epsilon_{\text{gate}}$, given by $\epsilon_{\text{gate}} = \epsilon_{\text{total}}/(n_T n_L)$ – this is the max error per gate that we can tolerate in order to keep the overall computational error rate below $\epsilon_{\text{total}}$.

\textbf{Noise model and logical operations} Logical operations are implemented by specific fusion networks realizing space-time channels for fault-tolerant surface code operations. The fault-tolerant operations we consider are based on lattice surgery~\cite{horsman2012surface,litinski2019game}, noisy magic state preparation~\cite{lodyga2015simple,brown2020universal,Interleaving}, along with magic state distillation~\cite{bravyihaahmagic,haah2017magic,haah2018codes,litinski2019magic}. Each T-gate is realized by a lattice surgery operation on a specified number of target logical qubits along with a distilled encoded $|T\rangle = (|0\rangle + e^{i\pi/4}|1\rangle)/\sqrt{2}$ state. Each fusion-network must be large enough such that the logical error rate per gate is less than $\epsilon_{\text{gate}}$.

The resource states that are produced are in general noisy, with further errors occurring during their propagation and measurement. In addition, fusions between resource states are also subject to noise. One can consider the combined effect of all error processes as resulting in erroneous outcomes on the fusions (for more details, see Ref.~\cite{FBQCpaper}). In particular, we characterize the noise on each fusion measurement by two parameters: $p_P$ and $p_E$, known as the Pauli error rate and erasure error rate, respectively. We assume that for each measurement, there is a probability $p_E$ that the outcome is erased, a probability of $p_P(1-p_E)$ that the outcome is incorrect (\emph{i.e.} bit-flipped but not erased), and a probability of $(1-p_E)(1-p_P)$ that the measurement is correct. We refer to $p=(p_P, p_E)$ as the physical error rate. These parameters also determine the error rate that magic states can be initially prepared with. By post-selecting based on erasure and syndrome information, one can prepare the initial magic states with logical Pauli error rates comparable to the physical Pauli error rate, provided physical error rates are sufficiently low~\cite{li2015magic}. Such protocols require only modest overhead.

For a physical error rate $p$, a generic lattice surgery operation in our scheme has a logical error rate
\begin{equation}\label{eqLER}
\epsilon_{\text{gate}} = Ae^{-Bd}
\end{equation}
where $A$ and $B$ are parameters depending on the error rate $p$ that are estimated by numerical simulations, and $d$ is the code distance (the minimal number of elementary errors required to cause a logical fault). We refer to the exponential decay of $\epsilon_{\text{gate}}$ (as a function of $A$, $B$, and $d$) as the \emph{sub-threshold scaling}. Each lattice surgery operation requires $2d^2$ RSGs in footprint and takes $d$ clock cycles (using the doubly-encoded patch lattice surgery method of Ref.~\cite{litinski2019game}). The size of the fusion network is therefore determined by the tolerable gate error rate and numerically estimated parameters  $d=1/B \log(A/\epsilon_{\text{gate}})$.

\textbf{Distillation protocol} With the code distance $d$ fixed, we can estimate the overhead required to distill encoded $|T\rangle$ states. Distilling encoded $|T\rangle$ states is a highly optimized protocol that depends on the parameters $\epsilon_{\text{gate}}$, $A$, $B$, $d$, and $p$. It requires $n_{\text{distill}}$ additional RSGs to prepare and route an encoded $|T\rangle$ state every $d$ clock cycles with error rate less than $\epsilon_{\text{gate}}$. We consider distillation factories based on multiple levels of the $15$-to-$1$ protocol~\cite{haah2017magic,haah2017magic_low} and in particular, the FBQC analogue of the implementation in Sec.~$3$ of Ref.~\cite{litinski2019magic}. Each level of this protocol consists of a distillation block in which the $15$-to-$1$ distillation circuit takes place, and a connector block which receives the output magic states of lower level distillation blocks and performs the required lattice surgery operation. Output magic states from each level are routed to higher level connector blocks using interleaving fiber. The number of distillation blocks at each level is chosen such that the magic state production rate is not less than the magic state consumption rate of the next higher level. We utilize two such factories, with a total footprint denoted by $n_{\text{distill}}$.

\section{Results}
\label{sec:official_result}
Here we provide an analysis of the algorithmic and architectural techniques used in this paper. On the algorithmic side, first we show the scaling of this algorithm in terms of qubits $n_L$, gate count $n_T$, and depth $D_T$ with respect to the metrics introduced in \ref{sec:molecular_hamiltonian}. Given that the system sizes we consider are large enough such that the estimated footprint of magic state distillation factories is relatively small (as mentioned above and detailed later), next we summarize strategies for parallelizing the circuits. We then compare computational cost of the serialized and parallelized circuits to each other.

On the architectural side, we first describe the fault-tolerance overhead of these algorithms in the FBQC model, and provide concrete runtime and footprint estimates for these algorithms. Then we comment on the relative footprint of the magic state factory, which leads us to a discussion on a novel method that allows injection of multiple distilled magic states (while keeping the number of Clifford gates comparatively small), and estimate the potential reduction in runtime associated with these techniques (this is fully detailed in Appendix \ref{sec:constant_time}). 

\subsubsection{Generality of results}
\label{sec:generality}
Here we emphasize that (most of) the results we present are applicable to a wide variety of architectures and are \textit{not} specific to the FBQC model; in particular, the estimate of the relative size of the magic state factory to the data blocks is a result of how expensive it is to perform lattice surgery operations when consuming T-states sequentially on large numbers of qubits, and is \textit{not} a result of the FBQC model. This conclusion would be true in \textit{any} architecture that uses the $15$-to-$1$ distillation protocols we highlight. The only results that are specific to the FBQC model are the ability to use \textit{interleaving} to perform space-time tradeoffs, and the underlying physical implementation of these schemes (\textit{i.e.}, counting RSGs rather than the more familiar notion of ``physical" qubits, and the clockspeed that is specific to our photonic platform).

\subsection{QPE cost estimate: scaling}
\label{sec:cost_estimate}
Here we discuss the algorithmic resource estimates for performing quantum phase estimation (QPE) for the molecular systems considered in this article. As mentioned in Section \ref{sec:algo_description}, the goal of QPE is to produce an estimate of an eigenenergy given unitary access to a Hamiltonian and an eigenstate of said Hamiltonian. The overall number of T gates $n_T$ for QPE can be expressed as the cost of embedding the Hamiltonian into a unitary circuit $n_{T,Q}$, and the number of queries one must make to this circuit to approximate an eigenenergy up to error $\epsilon_P$. In general, to approximate an eigenenrgy up to precision $\epsilon_P$, one must query the unitary input to QPE $\mathcal{O}\big(\frac{1}{\epsilon_P}\big)$ times. We can express the asymptotic T gate count as:
\begin{equation} \label{eq:cost_function_count}
 n_T= \min_{\epsilon_Q + \epsilon_P \leq \epsilon} n_{T,Q}(\epsilon_Q) \frac{\alpha \pi P}{\epsilon_P},
\end{equation}
where $n_{T,Q}(\epsilon_Q)$ is the T-count contribution from the qubitization subroutine~\cite{Qubitization2019}  (\textit{i.e.}, the cost of embedding the Hamiltonian into a unitary circuit; see Appendix \ref{sec:block_encoding_strategy}), and $(\alpha \pi P)/\epsilon_P$ is the T-count contribution from phase estimation (or the number of times one must query the qubitized unitary to produce an estimate of the eigenenergy). Here the constant $P$ depends on the choice of phase estimation technique; we take this to be $1/2$, following the analyses presented in~\cite{Babbush2018} and~\cite{Heisenberg2020}. Notice here that the number of queries scales linearly with the norm of the Hamiltonian $\alpha$ (introduced in Section \ref{sec:molecular_hamiltonian}). This is because QPE estimates an eigenphase which lies between $-\pi$ and $\pi$, but the spectrum of the Hamiltonian may not lie in this range. When we block encode/qubitize the Hamiltonian (as discussed Appendix \ref{sec:block_encoding_strategy}), we rescale the Hamiltonian's eigenspectrum by a factor of $\alpha$, so we need to query the qubitized unitary a number of times proportional to this rescaling factor. Finally, $\epsilon_P$ is the precision chosen in phase estimation, and $\epsilon_Q$ is a catch-all for the error due to approximations in the qubitization procedure. To achieve the optimal T-count, the above expression must be minimized with respect to $\epsilon_P$ and $\epsilon_Q$, subject to an imposed constraint on the overall error budget $\epsilon$ to achieve chemical accuracy.

Similarly, the depth of the circuit $D_T$ can be expressed as:
\begin{equation} \label{eq:cost_function_depth}
  D_T = \min_{\epsilon_Q + \epsilon_P \leq \epsilon} D_{T,Q}(\epsilon_Q) \frac{\alpha \pi P}{\epsilon_P},
\end{equation}
where $D_{T,Q}$ is the T-depth of the qubitization subroutine, and $(\alpha \pi P)/\epsilon_P$ is once again the number of times we must query the qubitization subroutine.

Following the analyses and gate-counting arguments presented in Refs.~\cite{DoubleFactorized_MSFT,Lee2020}, these quantities, in the leading order, can be expressed as follows:

\begin{equation}
\begin{split}
n_{T,Q} & = \mathcal{O}\left(\frac{M}{\lambda}+ N \beta \lambda + N\beta + N \right) \\ & \quad + \mathcal{O}(\sqrt{R \log (M)}  \\ &\quad + \sqrt{M\log (1/\epsilon_Q)}), \\ \\
D_{T,Q} &= \mathcal{O}\left(\frac{M}{\lambda}+  \log(N)\beta \right)  \\& \quad + \mathcal{O}(\sqrt{R \log (M)} \\ &\quad + \sqrt{M\log (1/\epsilon_Q)}), \\ \\
n_L&= N\beta(1+\lambda) + 2N + \mathcal{O}(\log (N/\epsilon_Q)),
\end{split}
\end{equation}

for a tunable, integer parameter $\lambda$ and $\beta = \lceil 5.652 + \log_2 N \cdot \alpha/\epsilon_Q \rceil$. While the expressions for $n_{T,Q}$ and $D_{T, Q}$ are broadly similar, note the linear vs. logarithmic dependence on $N$ for these expressions, respectively. This difference in dependence is explained in the following subsection (\ref{sec:algorithmic_parallelization}).

\subsection{Algorithmic parallelization}
\label{sec:algorithmic_parallelization}
As mentioned in Section~\ref{sec:ft_algo_cost_model}, it is useful to consider the \textit{computational volume} of a computation, which is lower (upper) bounded by $V_D$ ($V_n$). For the qubitization algorithm described in this paper, there are a handful of techniques and augmentations to subroutines one can exploit to realize a $V_D$ that is significantly smaller than $V_n$. These techniques range from known, trivial techniques, to novel and/or non-trivial optimizations. We detail these techniques in Appendix \ref{sec:algo_parallelization}, and quote the relevant results here.

\textbf{Gap between $V_D$ and $V_n$} In order to get a sense of the magnitude of computational volume improvements possible, we consider the resource costs of three scenarios: (i) optimizing for $V_n$ and applying T gates serially; (ii) optimizing for $V_n$ but applying multiple T gates in parallel; and (iii) optimizing for $V_D$ and applying T gates in parallel. We provide the resource estimates for each molecule and basis set for each scenario in tables \ref{table:double_tcount}, \ref{table:tdepth_savings}, and \ref{table:tdepth_savings_depth_opt}. Here we quote the order of magnitude cost for the smallest and largest classically intractable instances considered: full configuration interaction (FCI) picture of $\text{PF}_6^{-}$ in the cc-pVDZ basis as the smallest instance, and FCI picture of LFEC in the cc-pVTZ basis as the largest.

\begin{table*}[t]
\renewcommand{\arraystretch}{1.15}
  \begin{tabular}{c c | c | c | c }
    & & T-count & T-depth & $\frac{\text{T-count}}{\text{T-depth}}$ \\
    \hline
    STO-3G & EC & $6.32\times10^{10}$ & $5.8\times10^{9}$ & $10.89$\\
    & LEC & $8.07\times10^{10}$ & $7.94\times10^{9}$ & $10.17$\\
    & LREC & $7.95\times10^{10}$ & $8.11\times10^{9}$ & $9.80$\\
    & FEC & $9.56\times10^{10}$  & $9.38\times10^{9}$ &$10.19$\\
    & LFEC & $1.20\times10^{11}$  & $1.24\times10^{10}$ &$9.66$\\
    & PF$_6^-$ &  $1.57\times10^{11}$ & $1.38\times10^{10}$ & $11.37$\\
    & LiPF$_6$ & $2.07\times10^{11}$  & $1.99\times10^{10}$ &$10.37$\\
    \hline
    DZ & EC & $5.41\times10^{11}$ & $4.72\times10^{10}$ &$11.45$ \\
    & LEC & $7.57\times10^{11}$ & $7.00\times10^{10}$ &$10.82$\\
    & LREC & $7.25\times10^{11}$ & $6.8\times10^{10}$ &$10.67$\\
    & FEC & $8.01\times10^{11}$ & $7.27\times10^{10}$ &$11.01$\\
    & LFEC &  $1.08\times10^{12}$ & $1.06\times10^{11}$ &$10.18$\\
    & PF$_6^-$ & $1.00\times10^{12}$ & $7.45\times10^{10}$ & $13.48$ \\
    & LiPF$_6$ & $1.45\times10^{12}$ & $1.35\times10^{11}$ &$10.68$\\
    \hline
    6-311G & EC & $1.69\times10^{12}$ & $1.74\times10^{11}$ & $9.73$\\
    & LEC &  $2.37\times10^{12}$ & $2.63\times10^{11}$ & $9.01$\\
    & LREC &  $2.26\times10^{12}$ & $2.49\times10^{11}$ & $9.06$\\
    & FEC &  $2.45\times10^{12}$ & $2.67\times10^{11}$ & $9.18$\\
    & LFEC &  $3.31\times10^{12}$ & $3.66\times10^{11}$ & $9.02$\\
    & PF$_6^-$ &  $2.54\times10^{12}$ & $2.18\times10^{11}$ &$11.65$\\
    & LiPF$_6$ & $3.79\times10^{12}$ & $4.01\times10^{11}$ & $9.44$\\
    \hline
   cc-pVDZ & EC &  $2.68\times10^{12}$ & $2.97\times10^{11}$ & $9.03$\\
    & LEC & $3.69\times10^{12}$ & $4.41\times10^{11}$ & $8.36$\\
    & LREC &  $3.32\times10^{12}$ & $3.92\times10^{11}$ & $8.48$\\
    & FEC & $3.57\times10^{12}$ & $4.15\times10^{11}$ & $8.59$\\
    & LFEC &  $4.87\times10^{12}$ & $5.97\times10^{11}$ & $8.15$\\
    & PF$_6^-$ & $2.32\times10^{12}$ & $2.24\times10^{11}$ & $10.31$\\
    & LiPF$_6$ & $3.49\times10^{12}$ & $3.85\times10^{11}$ & $9.06$\\
   \hline
      cc-pVTZ & EC &  $6.25\times10^{13}$ & $5.78\times10^{12}$ &$10.80$\\
    & LEC &  $8.37\times10^{13}$ & $8.28\times10^{12}$ & $10.11$\\
    & LREC &  $7.26\times10^{13}$ & $6.95\times10^{12}$ &$10.44$\\
    & FEC &  $7.87\times10^{13}$ & $7.52\times10^{12}$ &$10.48$\\
    & LFEC &  $1.06\times10^{14}$ & $1.07\times10^{13}$ &$9.89$\\
    & PF$_6^-$ &  $4.55\times10^{13}$ & $3.57\times10^{12}$ &$12.74$\\
    & LiPF$_6$ &  $6.65\times10^{13}$ & $5.95\times10^{12}$ &$11.17$\\
    \hline
  \end{tabular}
  \caption{T-count and T-depth comparison when optimizing $V_n$. The last column shows the potential savings offered by parallelizing the application of T gates (merely swapping out T-count for T-depth). All instances offer a potential savings over $8$x, with the largest instance reducing by over $13$x.  \label{table:tdepth_savings}}
\end{table*}

\begin{table*}[t]
\renewcommand{\arraystretch}{1.15}
  \begin{tabular}{c c | c | c | c | c }
    & & T-count & T-depth & $\frac{\text{T-count}}{\text{T-depth}}$ & $\frac{\text{original} V_n}{V_D}$ \\
    \hline
    STO-3G & EC & $1.34\times10^{11}$ & $5.04\times10^{9}$ & $26.67$ & $12.53$\\
    & LEC & $1.69\times10^{11}$ & $6.83\times10^{9}$ & $24.68$ & $11.82$\\
    & LREC & $1.64\times10^{11}$ & $6.99\times10^{9}$ & $23.39$ & $11.37$\\
    & FEC & $1.99\times10^{11}$ & $8.11\times10^{9}$ &$24.56$ & $11.78$\\
    & LFEC & $2.46\times10^{11}$  & $1.08\times10^{10}$ &$22.87$ & $11.17$\\
    & PF$_6^-$ &  $3.49\times10^{11}$ & $1.11\times10^{10}$ & $31.42$ & $14.13$\\
    & LiPF$_6$ & $4.36\times10^{11}$  & $1.71\times10^{10}$ &$25.44$ & $12.05$\\
    \hline
    DZ & EC & $1.46\times10^{12}$ & $4.02\times10^{10}$ &$36.21$ & $13.44$ \\
    & LEC & $2.00\times10^{12}$ & $6.06\times10^{10}$ &$33.01$ & $12.49$\\
    & LREC & $1.90\times10^{12}$ & $5.89\times10^{10}$ &$32.31$ & $12.33$\\
    & FEC & $2.13\times10^{12}$ & $6.3\times10^{10}$ &$33.78$ & $12.72$\\
    & LFEC &  $2.79\times10^{12}$ & $9.15\times10^{10}$ &$30.44$ & $11.76$\\
    & PF$_6^-$ & $2.87\times10^{12}$ & $6.33\times10^{10}$ & $45.42$ &  $15.87$\\
    & LiPF$_6$ & $3.82\times10^{12}$ & $1.17\times10^{11}$ &$32.62$ & $12.35$\\
    \hline
    6-311G & EC & $4.31\times10^{12}$ & $1.51\times10^{11}$ & $28.59$ & $11.24$\\
    & LEC &  $5.83\times10^{12}$ & $2.27\times10^{11}$ &$25.66$ & $10.42$\\
    & LREC &  $5.57\times10^{12}$ & $2.16\times10^{11}$ &$25.83$ & $10.47$\\
    & FEC &  $6.09\times10^{12}$ & $2.31\times10^{11}$ &$26.35$ & $10.61$\\
    & LFEC & $7.99\times10^{12}$ & $3.34\times10^{11}$ &$23.92$ & $9.89$\\
    & PF$_6^-$ &  $6.97\times10^{12}$ & $1.88\times10^{11}$ &$37.04$ & $13.50$\\
    & LiPF$_6$ & $9.59\times10^{12}$ & $3.47\times10^{11}$ & $27.61$ & $10.92$\\
    \hline
   cc-pVDZ & EC &  $6.45\times10^{12}$ & $2.72\times10^{11}$ & $23.73$ & $9.87$\\
    & LEC & $8.58\times10^{12}$ & $4.03\times10^{11}$ & $21.28$ & $9.14$\\
    & LREC &  $7.78\times10^{12}$ & $3.57\times10^{11}$ &$21.76$ & $9.29$\\
    & FEC & $8.41\times10^{12}$ & $3.79\times10^{11}$ &$22.19$ & $9.41$\\
    & LFEC &  $1.12\times10^{13}$ & $5.46\times10^{11}$ &$20.53$ & $8.91$\\
    & PF$_6^-$ &  $6.07\times10^{12}$ & $1.94\times10^{11}$ &$31.31$ & $11.94$\\
    & LiPF$_6$ &  $8.47\times10^{12}$ & $3.51\times10^{11}$ &$24.10$ & $9.93$\\
   \hline
      cc-pVTZ & EC &  $1.96\times10^{14}$ & $5.27\times10^{12}$ & $37.16$ & $11.84$\\
    & LEC &  $2.55\times10^{14}$ & $7.64\times10^{12}$ & $33.42$ & $10.96$\\
    & LREC &  $2.24\times10^{14}$ & $6.4\times10^{12}$ & $34.99$ & $11.33$\\
    & FEC &  $2.44\times10^{14}$ & $6.93\times10^{12}$ & $35.14$ & $11.36$\\
    & LFEC &  $3.2\times10^{14}$ & $9.85\times10^{12}$ & $31.32$ & $10.72$\\
    & PF$_6^-$ & $1.54\times10^{14}$ & $3.08\times10^{12}$ & $49.99$ & $14.77$\\
    & LiPF$_6$ &  $2.12\times10^{14}$ & $5.43\times10^{12}$ & $39.03$ & $12.25$\\
    \hline
  \end{tabular}
  \caption{T-count and T-depth comparison when optimizing $V_D$. The second-to-last column shows the potential savings offered by parallelizing the application of T gates (merely swapping out T-count for T-depth). The final column shows the overall volume reduction when optimizing for $V_D$ vs. optimizing for $V_n$. All instances offer a potential savings over $8$x, with the largest instance reducing by over $15$x. These savings are greater than the savings demonstrated in Table~\ref{table:tdepth_savings}. \label{table:tdepth_savings_depth_opt}}
\end{table*}

\begin{itemize}
  \item[] \textbf{Optimizing for $V_n$ and applying T gates serially} Computational volume range: $\mathcal{O}(10^{16})$ to $\mathcal{O}(10^{19})$.
  \item[] \textbf{Optimizing for $V_n$ and applying T gates in parallel} Computational volume range: $\mathcal{O}(10^{15})$ to $\mathcal{O}(10^{18})$.
  \item[] \textbf{Optimizing for $V_D$ and applying T gates in parallel} Computational volume range: $\mathcal{O}(10^{15})$ to $\mathcal{O}(10^{18})$.
\end{itemize}

Across all instances, applying T gates in parallel (either optimizing for $V_n$ or optimizing for $V_D$) allows for an average order of magnitude reduction in resource requirements (given the ability to distill multiple magic states in parallel). See tables \ref{table:tdepth_savings} and \ref{table:tdepth_savings_depth_opt}.

\subsection{Fault-tolerant overheads}
\label{sec:overhead}

We now estimate the resources required to implement the above algorithms in a photonic fusion-based quantum computing (FBQC) architecture~\cite{FBQCpaper,Interleaving} based on surface codes~\cite{Kitaev2003,bravyi1998quantum,Dennis2002,Fowler2012}.  The overhead estimates in this section are based on performing T-gates \emph{sequentially}, one at a time. We will discuss advantages of performing T-gates in parallel in the next subsection. 

The total number of RSGs in the computation is just the sum of the number of RSGs necessary for each block of the overall architectural layout depicted in Fig.~\ref{fig:arch_schematic}; \textit{i.e.}, the data / ancilla block where we perform lattice surgery operations on $d \times d$ surface code patches, and the magic state distillation (MSD) block where we distill and inject $\ket{T}$ states. For the data / ancilla block we require $2d^2n_L$ RSGs, and we denote the footprint of the MSDs as $n_{\text{distill}}$ for a total of $n_{RSG} = 2d^2n_L + n_{\text{distill}}$ RSGs. The number of cycles required for the total algorithm is the product of the number of gates $n_T$ and the code distance $d$.

\begin{figure*}[t]
	\includegraphics[width=0.9\textwidth]{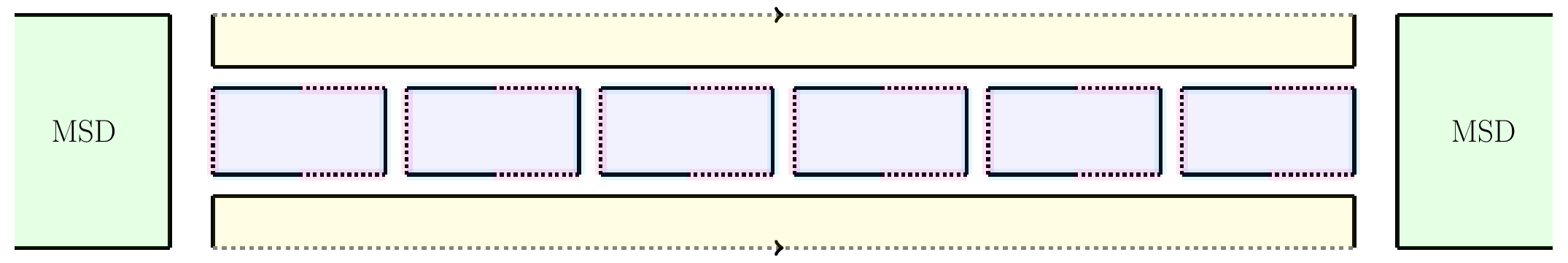}
	\caption{A schematic of the architecture used to produce the fault-tolerant estimates in Section~\ref{sec:overhead}. Resource state generators are laid out in a 2D plane with different regions responsible for different tasks. The central rectangles correspond to regions where logical qubits are produced from -- each patch produces two logical qubits propagating in time. The regions above and below the logical patches produce ancillary qubits, responsible for facilitating logic via lattice surgery; periodic boundary conditions are used as indicated by arrows. Two magic state distillation (MSD) factories are used which collectively inject distilled $|T\rangle$ states at the logical clock rate.}\label{fig:arch_schematic}
\end{figure*}

\textbf{Input parameters} We perform estimates for the fault-tolerant overheads of both footprint $n_{RSG}$ and time taken $n_{cycles}/f_{RSG}$ for a given operating point (given by a physical error rate $p$). As noted in Ref.~\cite{Interleaving}, all of the relevant components for resource state generation operate on GHz timescales -- for instance sources~\cite{paesani2019generation}, electronics and electro-optical modulators~\cite{wang2018integrated, eltes2020integrated} -- the relevant clockspeed for resource state generation is $f_{RSG} = 1$\;GHz. For many photonic architectures, loss is a dominant form of error. For photonic encodings such as the ‘dual rail’ encoding, loss is heralded and in addition to fusion ‘failure’ leads to erasures of the fusion outcomes. As such, we consider a simple noise model on the fusion outcomes that is dominated by erasures. See Ref.~\cite{FBQCpaper} for more details. We consider a ray in the noise parameter space defined by $(p_P(x), p_E(x)) = (x/10, x)$, $x\in[0,1]$. The threshold for the 6-ring fusion network along this ray occurs at $x^* = (4.71\pm 0.02)\times10^{-2}$, such that $(p_P(x^*), p_E(x^*)) = (4.71\times10^{-3},4.71\times10^{-2})$.  Here, the threshold is the maximal tolerable error rate (for this ray), below which, information can be protected arbitrarily well in the limit of large network size (see for example~\cite{Dennis2002} for thresholds in the context of toric codes and Ref.~\cite{FBQCpaper} for thresholds in the context of FBQC). We give estimates for two regimes:

\begin{itemize}
	\item[i)] High physical error rate: $x = 0.2\times x^*$ meaning $(p_P, p_E) = (9.4\times10^{-4}, 9.4\times10^{-3})$ where the logical error rate satisfies $A=0.4, B=1.1$.
	\item[ii)] Moderate physical error rate:$x = 0.1\times x^*$ meaning $(p_P, p_E) = (4.7\times10^{-4}, 4.7\times10^{-3})$ where the logical error rate satisfies $A=0.5, B=1.6$.
\end{itemize}
In the above, the parameters $A$ and $B$ defined by Eq.~(\eqref{eqLER}) are obtained by numerical simulations using the union-find decoder~\cite{delfosse2017almost} on the bulk 6-ring fusion network.

We remark that under this noise model and decoder, the $6$-ring fusion network has marginal thresholds of $p_P^* = 0.94\%$ and $p_E^* = 12\%$~\cite{FBQCpaper}, where the Pauli threshold can be increased using other decoders such as minimum weight matching~\cite{Dennis2002,kolmogorov2009blossom} (at the cost of increased run-time). We assume an algorithmic target error rate of $\epsilon_{\text{total}}=10^{-2}$. We also assume magic states are initially prepared with a logical Pauli error rate of $0.1\%$.

\textbf{Algorithmic footprint and runtime} We plot footprint and time estimates in Fig. \ref{fig:MSD_footprint_data} and note that the specific numbers can be found in Table~\ref{table:ft_overheads}. In the above high error rate (moderate error rate) estimates, code distances required for the data logical qubits range from $d=34$ to $d = 44$ ($d=24$ to $d=31$), meaning with trivial interleaving, between $1156$ to $1936$ ($576$ to $961$) resource state generators are required to produce an idling logical qubit. Additionally, with trivial interleaving in the high error rate (moderate error rate) regime, the number of RSGs required for the full computation is between $n_{RSG} = 6.4 \times 10^{6}$ and $n_{RSG} = 3.9 \times 10^{8}$  ($n_{RSG} = 3.2 \times 10^{6}$ and $n_{RSG} = 2.0 \times 10^{8}$), while the time taken in hours is between $t_{algo} = 0.60$ and $t_{algo}=1.3\times10^{3}$ ($t_{algo} = 0.42$ and $t_{algo}=9.1\times10^{2}$).

\begin{table*}[t]
	\renewcommand{\arraystretch}{1.15}
	\begin{tabular}{c c |c|c|c|c}
		& & $n_{RSG}$ [high error rate] & $t_{algo}$ (hours) [high error rate] & $n_{RSG}$ [moderate error rate]& $t_{algo}$ (hours) [moderate error rate] \\
		EC & STO-3G & $6.43\times 10^6$ & ~~$5.97\times 10^{-1}$ & $3.23\times 10^6$ & ~~$4.21\times 10^{-1}$ \\
		& DZ & $2.92\times 10^7$ & $5.56\times 10^0$ & $1.45\times 10^7$ & $3.91\times 10^0$ \\
		& 6-311G & $4.48\times 10^7$ & $1.84\times 10^1$ & $2.16\times 10^7$ & $1.27\times 10^1$ \\
		& cc-pVDZ & $5.16\times 10^7$ & $2.91\times 10^1$ & $2.49\times 10^7$ & $2.01\times 10^1$ \\
		& cc-pVTZ & $3.21\times 10^8$ & $7.63\times 10^2$ & $1.50\times 10^8$ & $5.20\times 10^2$ \\
		\hline
		LEC & STO-3G & $7.70\times 10^6$ & ~~$7.85\times 10^{-1}$ & $3.66\times 10^6$ & ~~$5.38\times 10^{-1}$\\
		& DZ & $3.61\times 10^7$ & $7.99\times 10^0$ & $1.70\times 10^7$ & $5.47\times 10^0$ \\
		& 6-311G & $5.11\times 10^7$ & $2.56\times 10^1$ & $2.46\times 10^7$ & $1.77\times 10^1$ \\
		& cc-pVDZ & $6.14\times 10^7$ & $4.10\times 10^1$ & $3.03\times 10^7$ & $2.87\times 10^1$ \\
		& cc-pVTZ & $3.62\times 10^8$ & $1.02\times 10^3$ & $1.80\times 10^8$ & $7.21\times 10^2$ \\
		\hline
		LREC & STO-3G & $7.52\times 10^6$ & ~~$7.73\times 10^{-1}$ & $3.57\times 10^6$ & ~~$5.30\times 10^{-1}$ \\
		& DZ & $3.43\times 10^7$ & $7.66\times 10^0$ & $1.61\times 10^7$ & $5.24\times 10^0$ \\
		& 6-311G & $4.97\times 10^7$ & $2.45\times 10^1$ & $2.39\times 10^7$ & $1.69\times 10^1$ \\
		& cc-pVDZ & $5.89\times 10^7$ & $3.69\times 10^1$ & $2.90\times 10^7$ & $2.58\times 10^1$ \\
		& cc-pVTZ & $3.43\times 10^8$ & $8.87\times 10^2$ & $1.60\times 10^8$ & $6.05\times 10^2$ \\
		\hline
		FEC & STO-3G & $7.56\times 10^6$ & ~~$9.29\times 10^{-1}$ & $3.58\times 10^6$ & ~~$6.37\times 10^{-1}$ \\
		& DZ & $3.52\times 10^7$ & $8.45\times 10^0$ & $1.66\times 10^7$ & $5.78\times 10^0$ \\
		& 6-311G & $4.97\times 10^7$ & $2.66\times 10^1$ & $2.39\times 10^7$ & $1.84\times 10^1$ \\
		& cc-pVDZ & $5.89\times 10^7$ & $3.96\times 10^1$ & $2.90\times 10^7$ & $2.77\times 10^1$ \\
		& cc-pVTZ & $3.43\times 10^8$ & $9.62\times 10^2$ & $1.60\times 10^8$ & $6.56\times 10^2$ \\
		\hline
		LFEC & STO-3G & $8.46\times 10^6$ & $1.17\times 10^0$ & $4.00\times 10^6$ & ~~$8.01\times 10^{-1}$ \\
		& DZ & $3.97\times 10^7$ & $1.14\times 10^1$ & $2.01\times 10^7$ & $8.07\times 10^0$ \\
		& 6-311G & $5.89\times 10^7$ & $3.67\times 10^1$ & $2.90\times 10^7$ & $2.57\times 10^1$ \\
		& cc-pVDZ & $6.77\times 10^7$ & $5.41\times 10^1$ & $3.34\times 10^7$ & $3.79\times 10^1$ \\
		& cc-pVTZ & $3.92\times 10^8$ & $1.29\times 10^3$ & $1.96\times 10^8$ & $9.09\times 10^2$ \\
		\hline
		$\text{PF}_6^-$ & STO-3G & $7.95\times 10^6$ & $1.52\times 10^0$ & $4.07\times 10^6$ & $1.09\times 10^0$ \\
		& DZ & $3.61\times 10^7$ & $1.06\times 10^1$ & $1.83\times 10^7$ & $7.53\times 10^0$ \\
		& 6-311G & $4.92\times 10^7$ & $2.75\times 10^1$ & $2.37\times 10^7$ & $1.90\times 10^1$ \\
		& cc-pVDZ & $5.06\times 10^7$ & $2.51\times 10^1$ & $2.44\times 10^7$ & $1.74\times 10^1$ \\
		& cc-pVTZ & $2.78\times 10^8$ & $5.43\times 10^2$ & $1.36\times 10^8$ & $3.79\times 10^2$ \\
		\hline
		$\text{LiPF}_6$ & STO-3G & $9.35\times 10^6$ & $2.07\times 10^0$ & $4.54\times 10^6$ & $1.43\times 10^0$ \\
		& DZ & $4.06\times 10^7$ & $1.53\times 10^1$ & $2.06\times 10^7$ & $1.08\times 10^1$ \\
		& 6-311G & $5.98\times 10^7$ & $4.21\times 10^1$ & $2.95\times 10^7$ & $2.95\times 10^1$ \\
		& cc-pVDZ & $6.04\times 10^7$ & $3.88\times 10^1$ & $2.98\times 10^7$ & $2.71\times 10^1$ \\
		& cc-pVTZ & $3.32\times 10^8$ & $8.13\times 10^2$ & $1.55\times 10^8$ & $5.54\times 10^2$ \\
		\hline
	\end{tabular}
	\caption{Fault-tolerant space and time overheads assuming an interleaving ratio of $L_{\text{intl}}=1$ -- this provides the quickest computation at the expense of the largest footprint. The high and moderate error rate regimes are defined in Section~\ref{sec:overhead}.}
	\label{table:ft_overheads}
\end{table*}

\textbf{Interleaving trade-offs} Finally, we remark on the possible reductions in footprint due to interleaving, a feature specific to FBQC. As we have mentioned, interleaving ratios of up to several thousand are certainly possible using low-loss optical fiber. One can readily reduce the number of RSGs required by a factor $\sim 5\times 10^3$ while increasing the time taken by the same factor, compared to the estimates performed with trivial interleaving. In this regime, a single RSG is sufficient to generate several logical qubits. This brings the required footprint for the present computations to between $\sim10^3$ and $\sim10^5$ RSGs, even in the case of high error rate. We remark that the additional fiber loss at these interleaving ratios has very little impact on the photon loss threshold~\cite{Interleaving}. Thus we have neglected any potential changes to the sub-threshold scaling for these interleaving ratios, as we expect them to be small.

\subsection{Relative footprint of magic state factories}
We remark (perhaps surprisingly) that the footprint utilized for magic state distillation in the above computations is no more than $2\%$ of the total computational footprint. For the more costly basis sets of cc-pVDZ and cc-pVTZ, the magic state distillation utilizes less than $0.3\%$ of the footprint. Indeed, in the above estimates (which use only two levels of distillation -- the first level of which is relatively small) the footprint required to distill and route a magic state every $d$ clock cycles requires the equivalent RSG footprint of between $78$ and $120$ full size logical qubits (\emph{i.e.} $78d^2$ to $120d^2$ RSGs). As the qubitization approach for the above molecules requires many thousands of logical qubits, the distillation overhead to produce one $|T\rangle$ state per logical clock cycle is comparatively small. However for other methods that use relatively fewer logical qubits (such as those based on Trotterization), the distillation cost becomes more significant. Moreover, one may utilize higher-rate distillation protocols and perform more T gates in parallel (where possible) to perform faster quantum computations (as we discuss in Appendix \ref{sec:constant_time}).

Here we again reiterate that the relative footprint of magic state factories is a consequence of how expensive lattice surgery operations are in a surface code computation over many logical qubits, and \textit{not} a consequence of the FBQC paradigm.

\subsection{Constant-time PPMs}
\label{sec:results_ppms}
The resource estimates above are based on the sub-threshold scaling behavior of the FBQC scheme in Ref.~\cite{FBQCpaper}, making use of one of the quantum computing architectures in Ref.~\cite{litinski2019game}. We propose an alternative architecture in this section, in the hope that it may reduce the overall space-time cost of the algorithm. As we shall later see, we have reasons to believe that the newly proposed architecture can reduce the computation time without incurring a significant amount of extra footprint. However, let us emphasize that, unlike the results above, we have \emph{not} carried out a Monte Carlo study on the threshold and sub-threshold behavior of this new scheme. A fair comparison between the two can be made only via a rigorous numerical study, which we leave for future work.

We previously observed that if we use a state-of-the-art quantum algorithm to simulate the molecules relevant to battery research, the magic state factory constitutes only up to $2\%$ of the entire quantum computer. Therefore, if magic state distillation is the main bottleneck of the quantum computation, one may be able to reduce the computation time by increasing the number of magic state factories, and injecting the distilled magic states appropriately.

In this section, we propose a different scheme which is based primarily on Litinski's scheme~\cite{litinski2019game,litinski2019magic}. Litinski's scheme uses Pauli Product Measurement (PPM) and preparation of states in the set $\{ |0\rangle, |+\rangle, |T\rangle\}$. PPM refers to a non-destructive measurement of the following observable:
\begin{equation}
  \bigotimes_{n=1}^{N} P_n, \label{eq:PPM}
\end{equation}
where $P_{n} \in \{I, X, Y, Z \}$ and $P_n$ is the logical Pauli operator of the $n$'th qubit, which ranges from $1$ to $N$. Preparation of $|0\rangle$ and $|+\rangle$ can be implemented in $\mathcal{O}(1)$ time. On the other hand, the time to prepare $|T\rangle$ depends on the choice and the number of magic state factories. Performing a PPM takes $\mathcal{O}(d)$ time. Therefore, if the number of magic state factories is abundant, the main bottleneck becomes the speed of the PPM. Using Fowler’s technique \cite{fowler2013timeoptimal}, one can speed up the computation. However, in this scheme, a $k$-fold increase in speed necessitates a $k$-fold increase in footprint.

In Appendix \ref{sec:constant_time}, we detail an $\mathcal{O}(1)$-time implementation of an arbitrary PPM which does not incur such additional footprint. Moreover, all the physical operations can be made local in two spatial dimensions. Therefore, optimistically speaking, in the regime in which the number of $|T\rangle$-state factories is abundant so that more than one $|T\rangle$ state can be distilled in a single logical clock cycle, one can expect to be able to inject them all in a single logical clock cycle, provided that there are no more than $\mathcal{O}(d)$ of them. First in \ref{sec:ppm_circuit} we discuss an abstract circuit model that implements a PPM; then in \ref{sec:fast_ppm} we explain how this protocol can be implemented on the surface code~\cite{bravyi1998quantum} and show that each step of this protocol can be executed in constant time; then in \ref{sec:compilation} we explain how this method can be used to compile a general quantum algorithm; and finally in \ref{sec:speedups} we discuss the expected speedup from this approach.

\textbf{Expected speedup using $\mathcal{O}(1)$-time PPMs} Being able to execute an arbitrary PPM in constant time importantly does \emph{not} mean one can execute an entire sequence of arbitrary PPMs in $\mathcal{O}(1)$ time. There are two important constraints that put a cap on the expected achievable runtime speedup: our compilation method is sensitive to whether or not successive PPMs in a sequence commute with one another, and the total number of PPMs one can execute in a single logical clock cycle must be smaller than the code distance $d$. These constraints are discussed in detail in Appendix \ref{sec:speedups}.

With these constraints in mind, the speedup we can achieve using this approach is limited by the ratio of T-count and T-depth. In our setup we can optimistically expect up to a $15$-fold speedup with a negligible change in the footprint of the device. However, note that whether this is possible or not depends on many microscopic details, such as the timescale needed for different physical operations. Moreover, we would like to emphasize that the threshold and the sub-threshold behavior of the code may change when we implement the transversal gates. The exact extent to which we can speed up the computation using our approach can only be determined by carefully inspecting all of these different factors. These studies are beyond the scope of this paper and are left for future work.

Keeping these caveats in mind, we can estimate the optimistic computation time for a single phase estimation for the molecules described to be on the order of $1\sim 4$ hours for the cc-pVDZ basis set and $35\sim 90$ hours for the cc-pVTZ basis set.

\section{Discussion}
\label{sec:discussion}

A great deal of progress on the fault tolerant resource estimates of Hamiltonian simulation algorithms has been driven by a focus on specific molecules, like FeMoco. Such studies have served as benchmarks for the introduction of new algorithmic techniques and improvements \cite{Reiher2016,Babbush2018,Berry2019,DoubleFactorized_MSFT,Lee2020}. Here we propose a new class of commercially-interesting benchmark molecules: the constituent chemicals of Li-ion battery electrolytes. We estimate the quantum computational cost of simulating these molecules, first in terms of architecture-agnostic parameters like T-count and qubit count, and then by compiling the logical operations into the primitives of a fusion-based quantum computing architecture~\cite{FBQCpaper}.

For the smallest instance that is classically intractable (full configuration interaction simulation of PF$_6^-$ using the cc-pVDZ basis at $1$\;mHartree precision), we estimated a total of $16,382$ logical qubits and $2.32\times10^{12}$ T gates (see Table~\ref{table:double_tcount}); after compilation, this corresponds to an estimated footprint of $\sim 24$ million RSGs and a runtime of less than $1$ day, in the moderate error rate case (see Table~\ref{table:ft_overheads}). Using the recently-introduced interleaving technique~\cite{Interleaving}, we can perform linear space-time trade-offs between runtime and footprint; \emph{e.g.} using an interleaving ratio of $24$ for the same aforementioned instance, we can instead have a footprint of $\sim 1$ million RSGs and a runtime of $\sim 2.5$ weeks.

Molecules like FeMoco live in a regime where the size of the magic state factory (MSF) is relatively large, so as to make infeasible distilling and injecting multiple distilled magic states in the same time step; by exploring larger instances of molecules of commercial interest, we challenge this intuition. In this algorithmic regime with $10^4 - 10^5$ logical qubits, the relative size of the MSF is small enough that opting to have multiple MSFs is no longer a prohibitive cost.

Understanding the cost of MSFs in this regime to be minimal, we explore the degree to which we can parallelize the distillation and consumption of magic states in the algorithm studied here. The drastic gap between the T-depth and T-count of various subroutines in the algorithm suggests on average an order of magnitude potential reduction in computational runtime (up to $\sim 15$x for the largest instance; see Table~\ref{table:tdepth_savings}). In order to exploit this potential parallelization, we need not only multiple MSFs, but also fast measurements so as to not be bottlenecked by the consumption of magic states. With this motivation in mind, we propose a novel method for performing constant time PPMs. We emphasize that the effect of the transversal gates used in this procedure on the threshold and sub-threshold scaling is \emph{not} known; we leave this study for future work.

We note the great deal of flexibility in space-time tradeoffs afforded us both algorithmically and architecturally. Algorithmically, similar magnitudes of computational volume reduction ($\sim 10$x) are possible in a variety of ways, mostly revolving around the choice of $\lambda$ in the ubiquitous $\textsc{QROM}$ subroutine (discussed in Appendix \ref{sec:lambda}). Na\"ively, these savings translate to runtime savings in the fully-compiled resource estimate; however, interleaving allows us to easily distribute these savings between runtime and footprint. If we could perform the constant time PPMs detailed above and exploit the $\sim 10$x reduction in computational volume, the resource estimate for the small instance mentioned above could go from a footprint of $\sim 1$ million RSGs and a runtime of $\sim 2.5$ weeks to $\sim 0.1$ million RSGs and $\sim 2.5$ weeks, $\sim 1$ million RSGs and $\sim 1.7$ days, or $\sim0.5$ million RSGs and $\sim 3.5$ days by distributing the savings fully into footprint, fully into runtime, or evenly between the two, respectively.

Future studies in this regime should explore alternative Hamiltonian simulation algorithms that may be amenable to massive parallelization, as well as further algorithmic improvements tailored to depth considerations. The computational costs of the problems studied here would decrease dramatically given the ability to distill and consume multiple magic states in the same time slice. Ultimately, the overall volume savings that can be had may benefit greatly from continued co-development between architectures and algorithms. The methods introduced here serve to motivate further research into algorithmic and architectural techniques in order to realize the potential gains to be had in reducing the computational volume of problems in quantum chemistry.

%\bibliography{bib}
%apsrev4-2.bst 2019-01-14 (MD) hand-edited version of apsrev4-1.bst
%Control: key (0)
%Control: author (8) initials jnrlst
%Control: editor formatted (1) identically to author
%Control: production of article title (0) allowed
%Control: page (0) single
%Control: year (1) truncated
%Control: production of eprint (0) enabled

\section{Acknowledgments}
The authors would like to thank Daniel Litinski for both helpful discussions on the method for fast magic state injection, as well as support calculating the relative footprint of magic state factories. We would also like to thank Sara Bartolucci, Patrick Birchall, Hector Bomb\'in, Hugo Cable, Axel Dahlberg, Chris Dawson, Andrew Doherty, Megan Durney, Mercedes Gimeno-Segovia, Nicholas Harrigan, Eric Johnston, Konrad Kieling, Kiran Mathew, Ryan Mishmash, Sam Morley-Short, Naomi Nickerson, Andrea Olivo, Mihir Pant, Fernando Pastawski, Terry Rudolph, Karthik Seetharam, Jake Smith, Chris Sparrow, Jordan Sullivan, Andrzej P\'erez Veitia, and all our colleagues at PsiQuantum for useful discussions. \textbf{Funding:} This work was funded by PsiQuantum and Mercedes-Benz Research and Development North America. \textbf{Author contributions:} EL performed the geometric optimization for each molecule and provided the background information on Li-ion battery electrolytes. YL performed the double factorization calculations for the Hamiltonians of each molecule and basis set. WP and SP performed the algorithmic resource estimates and parallelization analysis of the circuits. SP also developed the ``Gizens" rotation. SR performed the architecture-specific fault-tolerant resource estimates and calculated the relative footprint of the magic state factory for each molecule and basis set. IK developed the constant-time PPM technique and described the compilation methods for our fault-tolerant operations. \textbf{Competing interests:} The authors declare that they have no competing interests. \textbf{Data availability:} All data needed to evaluate the conclusions in the paper are present in the paper. Additional data related to this paper may be requested from the corresponding authors. Correspondence and requests for materials should be addressed to William Pol and Eunseok Lee.

\appendix

\section{Introduction to the fusion-based quantum computing scheme}
\label{sec:fusion}

Fusion-based quantum computation (FBQC) is a universal model of quantum computation particularly suited to photonic quantum computation (for more details, see Ref.~\cite{FBQCpaper}). FBQC shares similarities with the well-known measurement-based model of quantum computation (MBQC)~\cite{Raussendorf2003,raussendorf2006fault,raussendorf2007topological}, in the sense that the computation is carried out by a sequence of measurements and feed-forwards instead of unitary gates; however, there are also important differences. In MBQC, one often begins with a single resource state (\emph{i.e.} a cluster state) over a large number of qubits, and carries out the computation via a sequence of single-qubit measurements and feed-forward operations. In contrast, in FBQC one begins with a \textit{set} of small resource states, and the computation proceeds by applying certain one- and two-qubit measurements onto this set of resource states. In FBQC quantum information is stored and processed by repeatedly teleporting it between different resource states. The resource states required in FBQC are small, fixed entangled states involving a few qubits (constant in the size of the computation)  which are ``fused" together using one- and two-qubit measurements. Any (fault-tolerant) computation can be expressed as a sequence of such fusions (\emph{i.e.} measurements) between a number of resource states arranged in (2{+}1)D spacetime. This layout of resource states in spacetime and the measurements between them is known as a \emph{fusion network}.

For simplicity, we focus on the the $6$-ring fusion network introduced in Ref.~\cite{FBQCpaper}, where the resource states are $6$-qubit cluster states on a ring, and the necessary fusion measurements consist of bell basis measurements, \emph{i.e.,} the measurement of both two-qubit Pauli operators $X \otimes X$ and $Z \otimes Z$, along with single-qubit Pauli measurements and $(X + Y)/\sqrt{2}$ magic state basis measurements. The $6$-ring fusion network can be understood as implementing surface-code type topological quantum error correction.

\textbf{Resource state generators and interleaving} The primitive building block for a fusion-based quantum computer is a \emph{resource-state generator} (RSG). RSGs are arranged in arrays, each producing resource states at a fixed clock rate $f_{RSG}$, the qubits of which can be routed to different measurement devices to undergo single- and two-qubit measurements in a manner determined by the fusion network. For fusion networks based on the surface code -- such as the $6$-ring fusion network -- a $2$D array of RSGs is sufficient for fault-tolerant universal quantum computation. With access to a quantum memory, such as optical fiber, a single RSG can produce many simultaneously-existing resource states, which can allow for large quantum computations to be performed using much fewer RSGs than one might typically expect. Namely, with optical fiber as a quantum memory, one can utilize \textit{interleaving} to perform linear space-time tradeoffs in the fault-tolerant quantum computation~\cite{Interleaving}. Thus in effect, interleaving reduces the number of required RSGs by a factor $L_{\text{intl}}$, while increasing the time required to execute a fusion network by the same factor $L_{\text{intl}}$. We refer to the case with $L_{\text{intl}}=1$ as trivial interleaving.

Note that optical fiber offers exceptionally low transmission loss rates of $0.2$\;dB/km ($4.5 \%$/km) or less (at $1550$\;nm wavelengths)~\cite{Li2020}. It has been shown that for the $6$-ring fusion network, using fibers of length $1000$\;m or more is readily possible without introducing too much error or significantly degrading the threshold (indeed, only some of the photons in an interleaved $6$-ring fusion network must travel through this long fiber)~\cite{Interleaving}. For RSGs with \;gigahertz clock rates, roughly $5000$ photonic qubits can simultaneously be stored in a $1$\;km fiber~\cite{Interleaving}. Thus we presume interleaving ratios $L_{\text{intl}}$ of up to $5000$ to be achievable in practice.

\section{Summary of block encoding techniques and primitives for double-factorized Hamiltonians}
\label{sec:block_encoding_strategy}

Below we summarize the strategy for qubitizing the double-factorized form of the Hamiltonian. Quantum phase estimation (QPE) requires inputting a unitary to then extract eigenvalues. For a Hamiltonian of interest $H$, the input unitary is some function of the Hamiltonian; for example, $e^{-iHt}$. However, an attractive alternative that has led to the most efficient resource estimates for quantum chemistry simulations is ``qubitization”, where one can implement alternative functions of the Hamiltonian; up to errors due to rotation gate synthesis, some unitaries can be implemented exactly~\cite{Qubitization2019}. Performing qubitization relies heavily on ``block-encoding” $\mathcal{B}$, a process by which one encodes a Hamiltonian in a ``block'' of a unitary acting on a larger Hilbert space:
\begin{equation}
\mathcal{B}\big(H\big) =
\begin{bmatrix}
H & \dots \\
\vdots & \ddots
\end{bmatrix}.
\end{equation}
This allows us to embed a non-unitary matrix (such as a Hamiltonian) into a larger unitary circuit, which can now be input into QPE.

\textbf{Block-encoding primitives.} Block-encoding requires an input model to access the terms and coefficients of the Hamiltonian. A common input model is the ``linear combination of unitaries” (LCU) model, where one can express the Hamiltonian as a sum of easier-to-implement unitaries (typically Pauli matrices)~\cite{LCU2012}. In its simplest form for un-factorized Hamiltonians, this only requires two subroutines: one to load the Hamiltonian coefficients onto the ancillary register (often referred to as $\textsc{prepare}$ ~\cite{Babbush2018,Berry2019,Lee2020}), and another to selectively apply the corresponding Hamiltonian term (referred to as $\textsc{select}$~\cite{Babbush2018,Berry2019,Lee2020}) onto the system register.

\textbf{Block-encoding + LCU strategy}. The idea is that the ancilla register can be prepared in a superposition state over all possible indices for the terms in the Hamiltonian, weighted by the square root of the coefficient of each term (normalized by the norm $\alpha$), and then this same index ancilla register can now be used to apply the corresponding Hamiltonian term onto the system register, weighted by the appropriate amplitude. Finally, we uncompute the state preparation routine on the ancilla register; now, in the subspace where the ancillae are in the all-zero state, the Hamiltonian has been applied onto the system qubits.

\textbf{Data-loading oracle.} Another important and ubiquitous subroutine is the \textit{data-lookup oracle}, often referred to as $\text{QROM}$ ~\cite{Babbush2018,Berry2019}. Given a list of $K$ $b$-bit elements $\vec{a} = [a_0, ..., a_{K - 1}]$, $\text{QROM}$ performs the following task:
\begin{equation}
  \text{QROM} \ket{x} \ket{0} = \ket{x} \ket{a_x}
\end{equation}
One can use a number of ancillae to trade between space and time, resulting in space-time optimized costs for this unitary~\cite{Berry2019,LowKliuchnikov}. The asymptotic T-count of the $\text{QROM}$ in Ref.~\cite{Berry2019,LowKliuchnikov} is:
\begin{equation} \label{eq:qrom_cost}
  \left\lceil{\frac{K}{\lambda}}\right\rceil + b \cdot (\lambda - 1)
\end{equation}
where $\lambda$ is a tunable power-of-two number of copies for the $b$-bit approximation of the elements in $\vec{a}$. We highlight this subroutine for $3$ reasons: (i) it is used throughout the double-factorized qubitization circuit both as a standalone routine and as a subroutine for $\textsc{prepare}$; (ii) the plurality of the space-time cost of qubitizing the double-factorization comes from one large data-lookup (as summarized in Table~\ref{table:subroutine_split}); and (iii) because the choice of how to tune $\lambda$ greatly informs the overall cost of the algorithm either in terms of T-count or T-depth.

\begin{table}[h]
\renewcommand{\arraystretch}{1.1}
  \begin{tabular}{c c | c | c}
    & & \multicolumn{1}{p{3cm}|}{\centering Basis-change rotations \\ (\% total volume)} &  \multicolumn{1}{p{3cm}}{\centering Rotation data-lookup \\ (\% total volume)} \\
    \hline
    STO-3G & EC & $21.5$ & $32.6$ \\
    & LEC & $21.0$ & $35.3$ \\
    & LREC & $20.4$ & $36.1$ \\
    & FEC & $20.8$ & $35.0$ \\
    & LFEC & $20.3$ & $37.5$ \\
    & PF$_6^-$ & $23.3$ & $31.0$ \\
    & LiPF$_6$ & $21.5$ & $35.8$ \\
    \hline
    DZ & EC & $8.4$ & $51.5$ \\
    & LEC & $8.2$ & $53.5$ \\
    & LREC & $8.1$ & $53.6$ \\
    & FEC & $8.2$ & $53.1$ \\
    & LFEC & $7.9$ & $55.0$ \\
    & PF$_6^-$ & $9.2$ & $49.8$ \\
    & LiPF$_6$ & $8.2$ & $54.3$\\
    \hline
    6-311G & EC & $7.7$ & $56.1$\\
    & LEC & $7.4$ & $58.1$\\
    & LREC & $7.4$ & $57.9$\\
    & FEC & $7.5$ & $57.6$\\
    & LFEC & $7.1$ & $59.6$\\
    & PF$_6^-$ & $8.7$ & $53.3$\\
    & LiPF$_6$ & $7.7$ & $57.5$ \\
    \hline
   cc-pVDZ & EC & $7.1$ & $59.4$ \\
    & LEC & $6.7$ & $61.7$ \\
    & LREC & $6.8$ & $61.1$ \\
    & FEC & $6.8$ & $60.8$ \\
    & LFEC & $6.6$ & $62.6$ \\
    & PF$_6^-$ & $8.1$ & $55.5$ \\
    & LiPF$_6$ & $7.2$ & $59.7$ \\
   \hline
      cc-pVTZ & EC & $2.7$ & $71.0$ \\
    & LEC & $2.6$ & $72.2$ \\
    & LREC & $2.6$ & $71.7$ \\
    & FEC & $2.6$ & $71.7$ \\
    & LFEC & $2.5$ & $72.7$ \\
    & PF$_6^-$ & $3.0$ & $68.5$ \\
    & LiPF$_6$ & $2.7$ & $70.7$ \\
    \hline
  \end{tabular}
  \caption{Percentage of total qubitization volume per subroutine, where volume is taken to mean $n_T \times n_L$. The rotation data-lookup/$\text{QROM}$ occurs twice (so its volume is doubled in this table), and the rotations occur four times (so its volume is quadrupled in this table). As the basis set increases in size from $\text{STO-3G}$ to $\text{cc-pVTZ}$, the percentage share of the rotations decreases drastically, and the percentage share of the $\text{QROM}$s nearly doubles. Even still, together these subroutines dwarf the volume contributions of other subroutines. \label{table:subroutine_split}}
\end{table}

\textbf{Double-factorization}. The block-encoding must be modified in order to account for the double-factorization of the Hamiltonian. The high-level alterations follow from the following observation: by adopting the Majorana representation of the fermion operators, one can re-express the Hamiltonian as a sum of products of one-body terms in the following way (see Ref.~\cite{DoubleFactorized_MSFT}, and Ref.~\cite{Lee2020}, Appendix C):
\begin{equation}  \label{eq:ham_rewrite}
 \begin{aligned}
 H_{DF} = \big(\text{Offset}\big)\mathcal{I} + \text{One}_{L^{-1}} + \text{Two}_H, \\
 \text{Two}_H = \frac{1}{4} \sum_r \Lambda_r \lvert \lvert L_{r} \rvert \rvert _{SC}^2 T_2 \bigg[\frac{\text{One}_{L^r}}{\lvert \lvert L_{r} \rvert \rvert _{SC}}\bigg], \\
 \text{One}_L = \frac{1}{2} \sum_m \lambda_m \gamma_{\vec{R}_{m, 0}} \gamma_{\vec{R}_{m, 1}}
 \end{aligned}
 \end{equation}
where $T_2 (x)=2x^2-1$ is a Chebyshev polynomial, $\lvert \lvert L_{r} \rvert \rvert _{SC}$ is the one-norm of the eigenvalues of each matrix $L^r$, the coefficients $\Lambda_r$ result from normalizing each $L^r$, and $\gamma_{\vec{R_m}}$ is a basis-rotated Majorana operator. The one-body term is expressed as a sum of products of basis-rotated Majorana operators, and the two-body term is expressed as the sum of polynomials of one-body terms. Also, $H_{DF}$ contains an identity offset that can effectively be ignored as it contributes a constant shift in energy.

\textbf{High-level strategy}. The ultimate objective of qubitizing the double-factorized Hamiltonian, as with naïve qubitization, is to selectively apply indexed terms of the Hamiltonian, weighted by the appropriate coefficients in the Hamiltonian sum. However (as noted in Eq.~\eqref{eq:ham_rewrite}), the terms we would like to selectively apply in the double-factorized Hamiltonian are non-trivial basis-rotated Majorana operators. When we perform the Hamiltonian factorization, we are performing an eigendecomposition which leaves us in the eigenbasis of each $L^{(r)}$ matrix, hence ``basis-rotated" operators. The amendments made to the naïve qubitization circuit are due to the machinery needed to load basis-rotating angles onto an ancilla register, and then selectively apply a series of rotations onto the system register conditioned on this ancillary register. These rotations are necessary in order to rotate the system qubits \textit{out} of the eigenbasis of each $L$ matrix, to then apply a standard Majorana operator (a Pauli), and then rotate \textit{back} into the original basis. Loading and then applying these angles requires two ancilla registers, one of size $\lceil \log R \rceil$ and another of size $\lceil \log M \rceil$, to serve as index registers over the ranges $r \in R$ and $m \in M$ respectively (and a host of additional workhorse ancillary registers).
Following Figure 16 in Ref.~\cite{Lee2020} (but using the notation of Ref.~\cite{DoubleFactorized_MSFT}), one can interpret the initial series of state preparation routines and data-loaders as one large $\textsc{prepare}$ subroutine that loads the appropriate rotation angles, and then the sequence of rotations (and its dagger) and the applied Pauli as one large $\textsc{select}$.

\textbf{Dominant subroutines}. We note that the plurality of the T-count comes from applying the series of rotations and a data-loader that loads the angles for these rotations, and a vast majority of the total qubits are used in these two steps (summarized for each molecule and basis set in Table~\ref{table:subroutine_split}). Here we highlight that the brunt of the asymptotic expressions in the overall algorithm's resource estimate comes from the gate and qubit count of the basis-rotation $\text{QROM}$. Parallelizing the T-count of $\text{QROM}$ in Eq.~\eqref{eq:qrom_cost}, the list of elements we must load is a list of $M + N$ angles, each approximated by $N \cdot \beta$ bits of precision, leading to an asymptotic T-count:
\begin{equation} \label{eq:rotation_qrom_cost}
\left\lceil{\frac{M + N}{\lambda}}\right\rceil + N\beta \cdot (\lambda - 1)
\end{equation}
While each $\text{QROM}$ in the circuit has its own tunable parameter $\lambda$, the $\lambda$ for the basis-rotation $\text{QROM}$ contributes the largest number of ancillas of all of the data-loaders, and can be taken to be the same proxy $\lambda$ shown in the algorithm's asymptotic scaling expression in Eq. \eqref{eq:asymptotic_with_lambda}.

\section{Details on algorithmic parallelization techniques}
\label{sec:algo_parallelization}
While we typically take the T-count $n_T$ to be a measure of the runtime of the algorithm, following the numerics in Fig.~\ref{fig:MSD_footprint_data} showing the relatively small size of the magic state factory, one could instead take the T-depth $D_T$ as this measure. The T-depth of an algorithm is a proxy for how parallelizable the algorithm is; given the ability to perform multiple T-gates in parallel (see Section~\ref{sec:constant_time}) and having $\sim(n_T/D_T)$ magic state factories available, one can perform up to $n_T$ gates in $D_T$ time slices.

In these circumstances where multiple magic states may be injected and consumed in parallel, it may be preferable to take $V_D$ as the measure of computational volume rather than $V_n$, which assumes the \emph{serial} application of T gates. For the qubitization algorithm described in this paper, there are a handful of techniques and augmentations to subroutines one can exploit to realize a $V_D$ that is significantly smaller than $V_n$. These techniques range from known, trivial techniques, to novel and/or non-trivial optimizations. First we comment on the two largest-impact non-trivial augmentations, and in the following subsection we comment on the the other techniques considered. Together, the two subroutines discussed below constitute the majority of the overall computational volume of the algorithm (see Table~\ref{table:subroutine_split} for percent contributions per molecule and basis set).

\subsection{Parallelization of the two dominant subroutines}
\label{sec:parallelization}

\textbf{$\textsc{QROM}$ depth vs. $\textsc{QROM}$ count} The data-lookup subroutine is used throughout the qubitized double-factorization circuit; however, a particular $\textsc{QROM}$ used for loading angles for basis-changing rotations (the next subroutine considered) constitutes a large portion of the overall T-cost of the algorithm. For this particular $\textsc{QROM}$, we must load a list of $M + N$ angles, each approximated by $N \cdot \beta$ bits of precision, leading to the asymptotic T-count given in Eq. \eqref{eq:rotation_qrom_cost}. While each $\text{QROM}$ in the circuit has its own tunable parameter $\lambda$, the $\lambda$ for the basis-rotation $\text{QROM}$ contributes the largest number of ancillas of all of the data-loaders, and can be taken to be the same proxy $\lambda$ shown in Eq. \eqref{eq:asymptotic_with_lambda}.

The T-count of data-loaders has a linear dependence on both the number of items being loaded, and the bits of precision used to approximate each element (see Eqs.~\eqref{eq:qrom_cost} and \eqref{eq:rotation_qrom_cost}). On the other hand, the depth of a data-loader has linear dependence on the number of items being loaded, but \textbf{no} dependence on the bits of precision~\cite{LowKliuchnikov}. For the data-loader that loads the basis-changing rotations, we can compare the asymptotic T-count given in Eq. \eqref{eq:rotation_qrom_cost} with the asymptotic T-depth given below:
\begin{equation} \label{eq:qrom_depth}
  \ceil[\Big]{\frac{M+N}{\lambda}} + \log{\lambda}.
\end{equation}
This drastic difference results from one of $\textsc{QROM}$'s primitives, the so-called $\textsc{SwapUp}$ network~\cite{LowKliuchnikov,Wan_2021}. We can use this improvement throughout the circuit in all of the standalone $\textsc{QROM}$s and the ones inside of each $\textsc{prepare}$. However, this lack of dependence on the bits of precision is particularly significant in the context of the large data-loaders mentioned above––which make up a plurality of the computational volume (see Table~\ref{table:subroutine_split})––because the bits of precision for these data-loaders is $N \beta$. Removing this dependence can drastically reduce the overall T-depth.

\textbf{Log-depth basis rotations} Like the previous savings contribution, this savings results from exploiting a drastic difference between the T-count and T-depth of one of the largest percentage share subroutines in the qubitization circuit. Here we introduce a novel construction of the basis-changing rotations (detailed in Appendix VII.C.1 and VII.C.2 in \cite{DoubleFactorized_MSFT}) that achieves logarithmic depth in the number of target qubits, as opposed to linear depth, \textit{without} the need for extra ancillae, and \textit{without} altering the T-count. Since this subroutine contributes the next largest amount of computational volume to the overall circuit after the $\textsc{QROM}$ mentioned above  (see Table~\ref{table:subroutine_split}), and reduces the T-cost of this circuit from $\mathcal{O}(N)$ to $\mathcal{O}(\log{N})$, this aids in drastically reducing the overall computational volume.

The circuit for implementing block-encodings of products of basis-rotated Majorana fermions is described in Lemma 8 and Eqs. (57, 58) of~\cite{DoubleFactorized_MSFT}. The dominant cost of the circuit is a basis rotation that converts a ``rotated'' Majorana fermion $\gamma_{\vec{u}} = \sum_{j \in [N]} u_j \gamma_j$ to the ``unrotated'' Majorana $\gamma_0$, which can be readily implemented by a single Pauli under the Jordan-Wigner encoding. In \cite{DoubleFactorized_MSFT}, Lemma 8, it is shown that the fermionic basis rotation can be constructed from a ladder of operators of the form $V_p = \exp\{\theta_p \gamma_p \gamma_{p+1}\}$, where each angle in the set $\{\theta_p\}$ is chosen to zero a single term in the basis-rotated Majorana $\gamma_{u}$. Under the Jordan-Wigner encoding, $V_p$ becomes a Givens rotation. A circuit diagram is reproduced in Fig. \ref{fig:gizens} from \cite{DoubleFactorized_MSFT}, Eq. (58).

\begin{figure}[h]
	\includegraphics[width=0.30\textwidth]{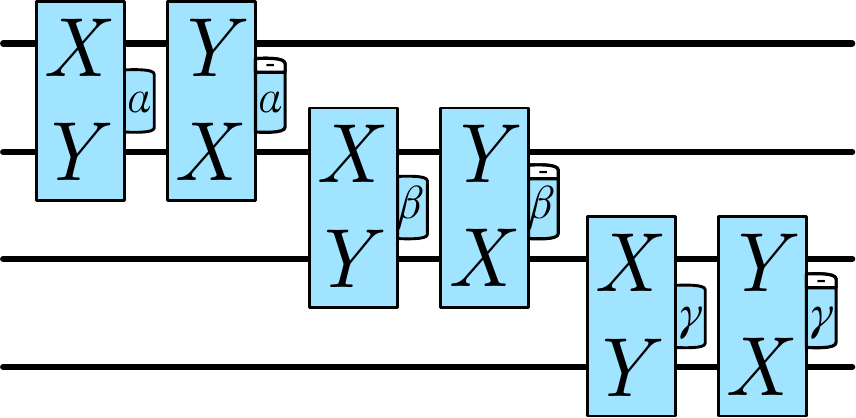} \quad
	\includegraphics[width=0.29\textwidth]{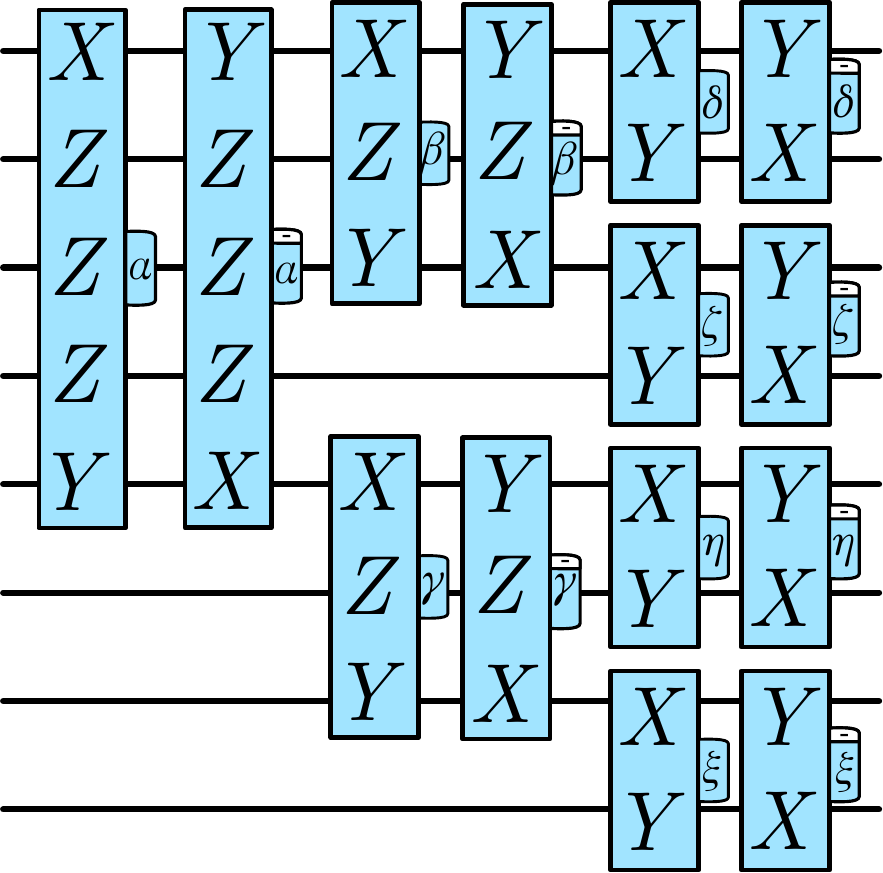}
	\caption{(left) A circuit diagram of the basis rotation circuit from \cite{DoubleFactorized_MSFT} for $N=8$. Each Givens rotation is a pair of commuting PPRs with opposite angle, and therefore has rotation count 2 and rotation depth 1. (right) A circuit diagram of the Gizens basis rotation circuit for $N=8$. Each Gizens rotation is also a pair of commuting PPRs with opposite angle, and therefore has rotation count 2 and rotation depth 1.}
	\label{fig:gizens}
\end{figure}

It is somewhat trivial to note that the basis change circuit requires $N - 1$ Givens rotations to zero $N - 1$ parameters in $\gamma_{\vec{u}}$, and so we cannot reduce the rotation count any further. However, the overlapping Givens rotations in Fig \ref{fig:gizens} do not commute and so the rotation depth in this instance scales identically to the rotation count. This need not be the case. Rather than $V_p$, we can construct a circuit from the generalized fermionic operator $\tilde{V}_{p,q} = \exp\{\theta\gamma_p \gamma_q\}$. Under the Jordan-Wigner encoding, these operators become:
\begin{align}
  \tilde{V}_{p,q}(\theta) \rightarrow &\exp\left\{\theta\left(X_p \otimes \left(\bigotimes_{p<t<q} Z_t\right) \otimes Y_q\right)\right\} \nonumber \\
  \cdot&\exp\left\{\theta\left(Y_p \otimes \left(\bigotimes_{p<t<q} Z_t\right) \otimes X_q\right)\right\}.
\end{align}
Given the inserted $Z$s, we'll refer to this operator as a ``Gizens'' rotation.

The approach is similar to that used for state preparation by Grover and Rudolph in \cite{groverrudolph}. Given the ability to compute partial sums of the amplitudes $u_p$, we apply a binary tree of Gizens rotations with angles chosen such that the correct partial sum is obtained at either side of the bifurcation. An example circuit is shown in Fig \ref{fig:gizens}. The proof that this construction evaluates the correct basis change circuit is the same as in \cite{groverrudolph}; we include it in Appendix~\ref{sec:gizens_proof}.

\subsection{Further parallelization techniques}

\textbf{T-depth vs. T-count of a Toffoli} A trivial speedup that can be implemented is in how we account for the T cost of Toffoli gates. In the resource estimates provided in Refs.~\cite{DoubleFactorized_MSFT,Lee2020}, the cost of the subroutines is given in terms of the number of Toffoli gates, whereas we provide the cost in terms of T-gates. One can implement a Toffoli using $4$ T-gates ~\cite{Jones2013}, and so all of our gate count expressions are a factor of $4$ larger than those provided in Refs.~\cite{DoubleFactorized_MSFT,Lee2020}. However, this factor of $4$ can be removed by using the T-depth of a Toffoli instead of its count, which is known to be only depth-$1$ \cite{Jones2013,Selinger_2013}. The $4$T-gate, depth-$1$ Jones Toffoli \cite{Jones2013} only uses a single clean ancilla (which can be borrowed from other clean ancillae in the algorithm, and re-used to decompose each Toffoli in the computation). Each Gidney compute-uncompute pair used in the $\textsc{select}$s within $\text{QROM}$s first introduced in Ref.~\cite{Babbush2018} can similarly be parallelized to depth-$1$ using a modification of the Jones construction in Ref.~\cite{Jones2013}.

\textbf{Controlled swaps in depth-$1$} For the double-factorized qubitization circuit depicted in Figure 16 in Appendix C of Ref.~\cite{Lee2020}, we must perform a series of controlled swaps a total of $4$ times, once before and once after each round of the basis-changing rotations detailed in Appendix \ref{sec:parallelization}. Each sequence of controlled swaps requires the application of $N$ Toffoli gates, where $N$ is the number of orbitals; however, a sequence of swaps targeting disjoint sets of qubits but with control on the \textit{same} control qubit can be implemented in $\mathcal{O}(1)$ depth~\cite{LowKliuchnikov,Campbell_2017}, taking the T cost from a Toffoli count of $4N$ to a constant T-depth independent of $N$. This low depth of sequences of controlled swaps can also be exploited within each $\textsc{prepare}$ subroutine used in the circuit, as the ``alias sampling" version of $\textsc{prepare}$ ends with a sequence of controlled swaps~\cite{Babbush2018}. Achieving this depth requires no additional ancillae.

\textbf{Log-depth comparators} Each ``alias sampling" version of $\textsc{prepare}$ also uses a comparator to perform an inequality test between two registers of qubits. Comparators are made up of adders; the trade-off between count and depth of adders is detailed in Ref.~\cite{gidney2020quantum}. The T-count of most adders have linear dependence in the size of the addend registers, and depth-efficient adders tend to have \textit{worse} constants than count-efficient adders. However, in the regime where performing T-gates in parallel is cheap, the time-savings from log-depth adders over linear-depth adders dwarfs the constant disadvantage in count. Our battery chemistry application falls squarely in this regime, so we opt for low-depth adders in our comparator constructions. The required number of ancillae used for either the linear or logarithmic depth comparator is linear in the register size~\cite{gidney2020quantum,LogDepthAdder}.

\subsection{Optimizing for $V_D$ vs. $V_n$}

When considering T-depth $D_T$ in addition to T-count $n_T$ and qubit count $n_L$, one must decide which computational volume, $V_n$ or $V_D$ to optimize. In order to get a sense of the magnitude of computational volume improvements possible, it is worth considering the resource costs of three scenarios: optimizing for $V_n$ and applying T gates serially; optimizing for $V_n$ but applying multiple T gates in parallel; and optimizing for $V_D$ and applying T gates in parallel. We provide tables with the resource estimates for each molecule and basis set (Tables~\ref{table:double_tcount}, \ref{table:tdepth_savings}, and \ref{table:tdepth_savings_depth_opt}), but here for each scenario we quote the smallest and largest classically intractable instances considered: full configuration interaction (FCI) picture of $\text{PF}_6^{-}$ in the cc-pVDZ basis as the smallest instance, and FCI picture of LFEC in the cc-pVTZ basis as the largest.

\textbf{Serial execution of T gates} As an initial starting point (and to make a fair comparison with the estimates of prior studies \cite{DoubleFactorized_MSFT,Lee2020}), it is informative to first note the algorithmic resource estimates in the case where we apply T gates serially; that is, optimizing for $V_n$ and also taking $V_n$ to be the computational volume. The T-counts and qubit counts for all molecules and basis sets are provided in Table~\ref{table:double_tcount}. For the smallest (largest) instance we have: $n_L = 16,\!382 \: (100,\!139)$ qubits and $n_T = 2.32 \times 10^{12} \: (1.06 \times 10^{14})$ T gates. Thus, the computational volume for serial application of T gates ranges from $\mathcal{O}(10^{16})$ to $\mathcal{O}(10^{19})$.

\textbf{Swapping count for depth} It is also informative to see what kind of volume savings are possible by merely swapping out $n_T$ for $D_T$ in the volume estimate; that is, minimizing $V_n$ but taking the resulting $V_D$ to be the computational volume. Performing this naïve substitution (without consideration of depth-optimization) exploiting the difference between depth and count for the subroutines highlighted above, we can achieve an overall computational volume reduction of $\sim 8$x for the smallest instance, $\sim 13$x for the largest instance, and $\sim 10$x for most instances. The detailed savings for each molecule and basis set are reproduced in Table~\ref{table:tdepth_savings}. Thus, the computational volume for parallel application of T gates while minimizing T-count $n_T$ ranges from $\mathcal{O}(10^{15})$ to $\mathcal{O}(10^{18})$.

\textbf{Depth optimization} Given that we are in a regime where we can opt to use $V_D$ as our computational volume, it makes sense to see how far we can push the potential savings by optimizing for $V_D$ instead of $V_n$. Optimizing for $V_D$ (alternatively, performing the minimization prescribed in Eq. \eqref{eq:cost_function_depth}, we ultimately find that the potential volume reduction is \textit{slightly greater} than that offered by simply swapping out count for depth, resulting in savings between $\sim 9$x and $\sim 15$x . This is shown in detail in Table~\ref{table:tdepth_savings_depth_opt}. Note that no qubit counts are provided as (remarkably) the $V_D$-optimal volume uses the same number of qubits as the $V_n$-optimal volume.

The optimization of $V_D$ consists of choosing an optimal set of parameters $\lambda$ for each $\textsc{QROM}$ with respect to this metric. As shown in Table~\ref{table:subroutine_split}, a plurality of the computational cost of the algorithm comes from performing a large $\text{QROM}$; however, it is also noted that $\textsc{QROM}$s are used throughout the qubitization procedure, both as standalone subroutines and also as part of each $\textsc{prepare}$ subroutine encountered in the circuit. Each of these data-loaders has its own $\lambda$ that can be optimized either \emph{independently} of or \emph{dependently} on each other.

\subsection{Various $\lambda$ optimization strategies}
\label{sec:lambda}
The choice of $\lambda$ greatly influences the overall resource estimates. Tuning the parameter $\lambda$ for any specific $\textsc{QROM}$ results in a trade-off between T-count and qubit count~\cite{LowKliuchnikov}; if it is large, the T-count is reduced at the expense of increasing the number of qubits, and vice-versa. However, another trade-off noted in Ref. ~\cite{LowKliuchnikov} is trading depth for count: one can continue to increase $\lambda$ past the point of minimal T-count, and decrease the T-depth at the expense of the number of qubits \textit{and} the T-count. The choice of $\lambda$ for each data-loader in the algorithm depends on whether one is trying to minimize the computational volume given as $V_n = n_T \times n_L$ or given as $V_D = n_D \times n_L$.

\textbf{Minimizing $V_n$} A natural choice in order to make a fair comparison with the methods of Refs.~\cite{Babbush2018,Berry2019,DoubleFactorized_MSFT,Lee2020} (thus, assuming the serial application of T gates) is to choose $\lambda$ to minimize $n_T \times n_L$. The minimum is achieved by  $\lambda = \mathcal{O}(\sqrt{M/N\beta})$, which leads to the following asymptotic expressions:
\begin{equation} \label{eq:asymptotic}
  \begin{aligned}
    n_{T,Q}&= \mathcal{O}(\sqrt{MN\beta}+N\beta), \\
    n_L&= \mathcal{O}(\sqrt{MN\beta}+N+\beta).
  \end{aligned}
\end{equation}
The exact numbers for the chemical systems introduced in Sections \ref{sec:chem_background} and \ref{sec:comp_description} are shown in Table~\ref{table:double_tcount}.

For clarity, we note two points about these resource estimates. First, the resource estimates for the molecules and basis sets that we consider are generally \emph{larger} than those provided in Refs.~\cite{DoubleFactorized_MSFT,Lee2020} (especially for the larger basis sets cc-pVDZ and cc-pVTZ). This is to be expected, as the factorization for the Hamiltonians of these molecules yield larger values for the ranks $R$ and $M$, and the norm $\alpha$ than those for molecules the size of FeMoco; additionally, those references consider the Toffoli count whereas here we consider the T-count (which is a factor of $\sim 4$x larger). Second, we note that the values in Table~\ref{table:double_tcount} assume the \emph{serial} application of magic states (and so, these estimates do \emph{not} make use of the algorithmic and architectural improvements we outline in Sections \ref{sec:algorithmic_parallelization} and \ref{sec:results_ppms}). In order to achieve the optimal space-time costs for these computations, it is \textit{not} obvious that one should opt to minimize $n_T \times n_L$. As shown in Fig.~\ref{fig:MSD_footprint_data}, for the T-counts and qubit counts quoted in Table~\ref{table:double_tcount}, the estimated size of the magic state factories necessary to distill T-gates is remarkably small. In such a regime, it may be preferable to minimize $D_T \times n_L$ and parallelize the application of magic states.

\textbf{Minimizing $V_D$} The optimization of $V_D$ consists of choosing an optimal set of parameters $\lambda$ with respect to this metric. Recall that a plurality of the computational cost of the algorithm comes from performing a large $\text{QROM}$; however, it is also noted that $\textsc{QROM}$s are used throughout the qubitization procedure, both as standalone subroutines and also as part of each $\textsc{prepare}$ subroutine encountered in the circuit. Each of these data-loaders has its own $\lambda$ that can be optimized \textit{independently} of each other. Previously, this independent optimization was done in order to minimize $V_n$ for each of these subroutines; a natural consideration is to instead choose each $\lambda$ to minimize $V_D$ for each $\textsc{QROM}$.

However, this naïve consideration does \textit{not} lead to the most optimal computational volume. Instead, to achieve the smallest volumes shown in Table~\ref{table:tdepth_savings_depth_opt}, we found that it was most beneficial to optimize each $\lambda$ \textit{contingent} on all other $\lambda$ values. The strategy is to determine which $\text{QROM}$, with $\lambda$ chosen to minimize Eq. \eqref{eq:qrom_cost}, contributes the largest number of ancillary qubits:
\begin{equation}
 Q_{\text{max}} \coloneqq \lambda_{\text{min-count}} \times \text{bits-of-precision}
 \end{equation}
 We then assign all other $\lambda$s to each $\text{QROM}$ by dividing $Q_{\text{max}}$ by the bits of precision for each $\text{QROM}$.

By finding the maximum number of ancillae used in a data-loader $Q_{\text{max}}$, this ensures that we have enough qubits allocated to borrow from in order to later set $\lambda$s for the rest of the $\text{QROM}$s that may indeed be \textit{larger} than their $\lambda_{\text{min-count}}$ value. It is noted in Ref.~\cite{LowKliuchnikov} that one can increase the value of $\lambda$ in a data-loader greater than its $\lambda_{\text{min-count}}$ that minimizes Eq. \eqref{eq:qrom_cost} in order to minimize depth at the expense of count. This is exactly what we do in order to minimize $V_D$; note that the counts reproduced in Table~\ref{table:tdepth_savings_depth_opt} exceed those reproduced in Table~\ref{table:tdepth_savings} (hence the enormous ratios in the penultimate column of Table~\ref{table:tdepth_savings_depth_opt}).

\textbf{Flexibility} The conclusion of the two preceding subsections is \textit{not} that one should necessarily optimize $V_D$ instead of $V_n$. This decision depends not only on the absolute smallest computational volume, but also on a number of other factors. Depending on one's architecture, one may employ alternative optimization strategies.

To give two brief examples, one could instead determine $\lambda_{\text{min-depth}}$ values that minimize the $V_D$ of each $\textsc{QROM}$. Overall, this leads to similar decreases in computational volume, but a larger T-depth and approximately an order of magnitude \textit{reduction} in qubit count compared to the values reproduced in Table~\ref{table:tdepth_savings_depth_opt}. As another example, opting for the strategy for minimizing $V_n$ may be preferable to that for optimizing $V_D$ if one is concerned about the overall number of magic state factories on your device. One would need on average $(\text{T-count}/\text{T-depth})$ many magic state factories in order to fully exploit the speedups presented in these sections; this ratio is markedly large, between $20$ and $50$, for the strategy that minimizes $V_D$ (shown in detail for each molecule in the penultimate column of Table~\ref{table:tdepth_savings_depth_opt}).

When assessing potential volume reductions by considering T-depth in addition to T-count, there is a good deal of flexibility depending on the relative importance one gives to qubit count, number of magic state factories, etc.. All of these potential savings are close in magnitude, so it is not obvious which optimization strategy one should opt for. The underlying architecture of a given approach may best inform which strategy to use. As yet, it is unclear how the transversal gates used for constant-time PPMs will affect the threshold and sub-threshold scaling of the codes used, and the potential runtime speedups will greatly depend on the timescales one expects of physical operations on a specific platform; we leave these studies for future work.

\section{Constant-time PPM details}
\label{sec:constant_time}
\subsection{Fast injection of distilled magic states}
\label{sec:faster_magic}

The resource estimate that produced the numbers in Table \ref{table:double_tcount} was based on the sub-threshold scaling behavior of the FBQC scheme in Ref.~\cite{FBQCpaper}, making use of one of the quantum computing architectures in Ref.~\cite{litinski2019game}. We propose an alternative architecture in this section, in the hope that it may reduce the overall space-time cost of the algorithm. As we shall later see, we have reasons to believe that the newly proposed architecture can reduce the computation time without incurring a significant amount of extra footprint. However, let us emphasize that, unlike the results above, we have \emph{not} carried out a Monte Carlo study on the threshold and sub-threshold behavior of this new scheme. A fair comparison between the two can be made only via a rigorous numerical study, which we leave for future work.

We previously observed that if we use a state-of-the-art quantum algorithm to simulate the molecules relevant to battery research, the magic state factory constitutes only up to $2\%$ of the entire quantum computer. Therefore, if magic state distillation is the main bottleneck of the quantum computation, one may be able to reduce the computation time by increasing the number of magic state factories, and injecting the distilled magic states appropriately.

Optimal injection of multiple distilled magic states is a nontrivial problem, since what is best can depend on a number of different factors, including the specific magic state distillation protocol used, routing of qubits, and the way in which Clifford gates are implemented. One possibility is to simply distill the $\ket{T} = (\ket{0} + e^{i\pi/4}\ket{1})/\sqrt{2}$ state and inject it. Alternatively, one can distill a $\ket{CCZ} = CCZ\ket{+++}$ state~\cite{Jones2013,Gidney2019efficientmagicstate,Chamberland2020}, which can be readily and efficiently used in the data-loader described in the discussion on parallelization above~\cite{DoubleFactorized_MSFT,Lee2020}.

In this section, we propose a different scheme which is based primarily on Litinski's scheme~\cite{litinski2019game,litinski2019magic}. Litinski's scheme uses Pauli Product Measurement (PPM) and preparation of states in the set $\{ |0\rangle, |+\rangle, |T\rangle\}$. PPM refers to a non-destructive measurement of the following observable:
\begin{equation}
  \bigotimes_{n=1}^{N} P_n, \label{eq:PPM}
\end{equation}
where $P_{n} \in \{I, X, Y, Z \}$ and $P_n$ is the logical Pauli operator of the $n$'th qubit, which ranges from $1$ to $N$. Preparation of $|0\rangle$ and $|+\rangle$ can be implemented in $\mathcal{O}(1)$ time. On the other hand, the time to prepare $|T\rangle$ depends on the choice and the number of magic state factories. Performing a PPM takes $\mathcal{O}(d)$ time, where $d$ is the code distance. Therefore, if the number of magic state factories is abundant, the main bottleneck becomes the speed of the PPM. Using Fowler’s technique \cite{fowler2013timeoptimal}, one can speed up the computation. However, in this scheme, a $k$-fold increase in speed necessitates a $k$-fold increase in footprint.

The main result of this section is an $\mathcal{O}(1)$-time implementation of an arbitrary PPM which does not incur such additional footprint. Moreover, all the physical operations can be made local in two spatial dimensions. Therefore, optimistically speaking, in the regime in which the number of $|T\rangle$-state factories is abundant so that more than one $|T\rangle$ state can be distilled in a single logical clock cycle, one can expect to be able to inject them all in a single logical clock cycle, provided that there are no more than $\mathcal{O}(d)$ of them. The rest of this section is structured as follows. We: (a) discuss an abstract circuit model that implements a PPM; (b) explain how this protocol can be implemented on the surface code~\cite{bravyi1998quantum}; (c) explain how this method can be used to compile a general quantum algorithm; and (d) discuss the expected speedup using this approach. We remark that while the fault-tolerant protocols in this section are described in terms of surface codes (for simplicity and familiarity), they may readily be expressed in FBQC in terms of the 6-ring network of Section \ref{sec:methods_overhead}.

\subsection{PPM Circuit}
\label{sec:ppm_circuit}
To understand our protocol, it is helpful to first consider an abstract circuit model which captures its spirit. Suppose we have $N$ qubits and we would like to implement a PPM. Without loss of generality, we can write down a Pauli as follows:
\begin{equation}
  \begin{aligned}
    P &= \bigotimes_{j=1}^N P_j \\
    &= i^s \left(\bigotimes_{j=1}^N X_j^{m_j^{X}} \right) \left(\bigotimes_{j=1}^{N} Z_j^{m_j^{Z}} \right),
  \end{aligned}
\end{equation}
where $s\in \{0,1,2,3\}$ and $m_j^X, m_j^Z \in \{0, 1\}$.

Hypothetically, if we were to measure individual $P_j$s, we could use an ancillary qubit initialized in the $|+\rangle$ state and apply the following entangling gates:
\begin{equation}
  \begin{aligned}
    P_j&=X: \quad CX \\
    P_j&=Y: \quad (S^{\dagger}\otimes I)\cdot CZ\cdot CX \\
    P_j&=Z: \quad CZ,
  \end{aligned}
  \label{eq:ppm_recipe}
\end{equation}
where the control is the ancillary $|+\rangle$ state and the target is the qubit being measured. Here the $S^{\dagger}$ gate acts on the control qubit. The measurement is performed in the $X$ basis. One can verify that $Z$ is applied to the ancillary state if and only if the qubit being measured is in the $+1$ eigenstate of the Pauli chosen.

Of course, what we really want is to measure $P$ without measuring the individual $P_j$s explicitly. For that purpose, one can use the $|\text{CAT}\rangle$ state, defined as follows:
\begin{equation}
  |\text{CAT}\rangle_{N} = \frac{1}{\sqrt{2}}\left(|\underbrace{0\ldots 0}_N\rangle + |\underbrace{1\ldots 1}_{N}\rangle \right).
\end{equation}
Now, one can pair up the $j$'th qubit that forms the $|\text{CAT}\rangle$ state to the $j$'th data qubit and apply the gates in Eq.~\eqref{eq:ppm_recipe}. If we measure the tensor product of $X$ on the Paulis, the parity of those measurement outcomes is precisely the non-destructive measurement outcome of $P$; see Fig. \ref{fig:ppm_circuit} for an example.

\begin{figure}[h]
  \begin{quantikz}
    \lstick[wires=4]{$|\text{CAT}\rangle_4$} &\ctrl{4} & \qw & \qw &\qw &\qw &\meter{X}\\
    & \qw & \ctrl{4} & \ctrl{4} &\gate{S^{\dagger}} &\qw &\meter{X}\\
    & \qw & \qw & \qw &\qw & \ctrl{4} &\meter{X}\\
    & \qw & \qw & \qw &\qw &\qw &\meter{X}\\
    \lstick[wires=4]{Data} & \targ{} &\qw  & \qw &\qw &\qw & \qw \\
    & \qw & \targ{} & \ctrl{} &\qw &\qw & \qw \\
    & \qw & \qw & \qw &\qw & \ctrl{}& \qw \\
    & \qw & \qw & \qw &\qw &\qw & \qw
  \end{quantikz}
  \caption{A circuit-level implementation of a PPM for a Pauli string $P=X\otimes Y\otimes Z\otimes I$. The outcome of the PPM is the joint parity of each depicted measurement's outcome.\label{fig:ppm_circuit}}
\end{figure}

\subsection{Fast PPM for surface code} 
\label{sec:fast_ppm}
Now, let us take a step back from this abstract circuit picture and think about how one can implement all the procedures we have explained so far fault-tolerantly, in particular, using patches of surface-codes~\cite{bravyi1998quantum}. Recall that the surface code is defined on a lattice with open boundary conditions, with two ``smooth'' and two ``rough'' boundaries. The qubits are arranged on the edges of the lattice. The surface code is a stabilizer code with a stabilizer group generated by a set of ``star'' operators, which are tensor products of $X$s along the edges that are incident on a vertex, and a set of ``plaquette'' operators, which are tensor products of $Z$s along the edges that surround the plaquette; see Fig. \ref{fig:surface_code}.

\begin{figure}[h]
  \includegraphics[width=0.5\columnwidth]{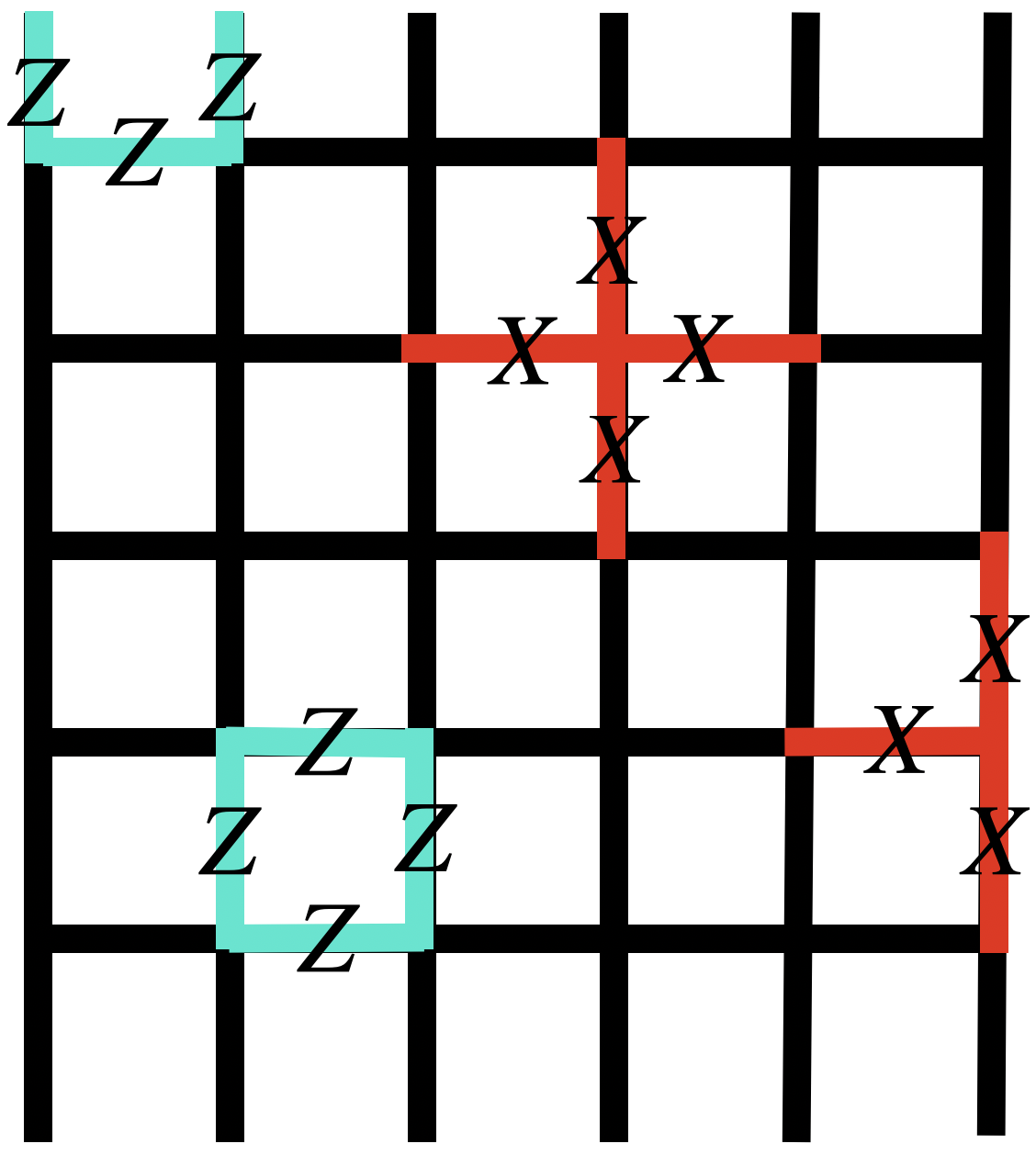}
  \caption{Surface code and its stabilizers. In this figure, the code distance is $d=6$.\label{fig:surface_code}}
\end{figure}

Since our goal is to implement the circuit in Fig. \ref{fig:ppm_circuit} in constant time, we must understand whether the gates that appear in this circuit diagram can be implemented in constant time on the surface code. We shall see towards the end of this section that preparation of an analogue of the $|\text{CAT}\rangle_N$ state (as opposed to the exact $|\text{CAT}\rangle_N$ state over the logical qubits) and measurement in the $X$-basis can be done in constant time. So let us focus instead on whether we can implement the unitary gates in Eq.~\eqref{eq:ppm_recipe}.

Because the surface code is a CSS code, it allows for a straightforward implementation of a transversal CNOT gate. Moreover, the CZ gate can be viewed as a CNOT gate conjugated by Hadamards. By folding a surface code in half along its diagonal line, it is possible to implement the logical Hadamard by applying a transversal Hadamard gate on all the qubits followed by a local relabeling of the qubits~\cite{Moussa2016} (which builds upon the local unitary equivalence between the the color code and surface code~\cite{KYP}). We remark that this folding operation is accessible in an interleaved architecture~\cite{Interleaving}.  Therefore, the CZ gate can also be implemented in constant time. Implementation of $S$ can also be done in constant time, either by injecting a $+1$ eigenstate of the Pauli-$Y$ operator, or by applying a transversal gate in the folded surface code picture. The transversal gate option, however, requires an extra entangling gate which may adversely affect the threshold and the sub-threshold scaling. On the other hand, the injection requires an additional logical qubit for \emph{every} logical data qubit, incurring up to a $50\%$ increase in the footprint.

Fortunately, it is possible to have the best of both worlds. Below, we explain a method to modify Eq.~\eqref{eq:ppm_recipe} so that one does not need to apply the $S^{\dagger}$ gate explicitly. The price we have to pay is modest; we only need two additional logical qubits, one that keeps the $+1$ eigenstate of $Y$ and another one dedicated to increasing the size of the $|\text{CAT}\rangle$ state to $N+1$.

The key idea is to \emph{not} apply the $S^{\dagger}$ explicitly, but instead update the observable that needs to be measured. In this case, the Pauli string simply becomes a tensor product of $X$s and $Y$s. In the surface code, measurement in the $X$ basis is easy, but measurement in the $Y$-basis is less straightforward; the former can be done in constant time, but to the best of our knowledge, measurement in the $Y$ basis requires an implementation of a single-qubit Clifford gate such as $H$ or $S$. However, this Pauli string (say $Q$), up to the stabilizer of the $|\text{CAT}\rangle$ state, is ``almost'' equal to a Pauli string consisting purely of $X$s. An important observation is that the $|\text{CAT}\rangle_{N+1}$ state obeys the following stabilizer constraint, both before and after the gates in Eq.~\eqref{eq:ppm_recipe} are applied:
\begin{equation}
  Z_nZ_m |\text{CAT}\rangle_{N+1} = |\text{CAT}\rangle_{N+1}
\end{equation}
for all $n, m \in \{1,\ldots, N+1 \}$. Therefore, if $Q$ has an even number of $Y$s,
\begin{equation}
  Q |\text{CAT}\rangle_{N+1} = \left(\bigotimes_{j=1}^N X_j\right)|\text{CAT}\rangle_{N+1} (-1)^{n_Y/2},
\end{equation}
where $n_Y$ is the number of $Y$s in $Q$. If $n_Y=0\mod 2$, up to a phase, the joint parity of $Q$ can be measured by measuring all the qubits in the $X$-basis, an operation that can be performed in constant time. If $n_Y$ is odd, then there will be one leftover $Y$. In that case, one can apply the gate corresponding to the $P_j=Y$ case for the $j=N+1$'th data qubit, which will be initialized to the $+1$ eigenstate of $Y$. This makes the last Pauli in $Q$ equal to $Y$, making $n_Y$ even. Moreover, the $+1$ eigenstate of $Y$ is returned to its original state after this process, which can be reused later; see Fig.~\ref{fig:ppm_circuit_nophase} for an example.

\begin{figure}[h]
  \begin{quantikz}
    \lstick[wires=5]{$|\text{CAT}\rangle_5$} &\ctrl{5} & \qw & \qw &\qw &\qw &\qw &\meter{X}\\
    & \qw & \ctrl{5} & \ctrl{5} &\qw &\qw  &\qw&\meter{X}\\
    & \qw & \qw & \qw & \ctrl{5} & \qw  &\qw&\meter{X}\\
    & \qw & \qw & \qw &\qw &\qw  &\qw &\meter{X}\\
    & \qw & \qw & \qw &\qw &\ctrl{5} &\ctrl{5} &\meter{X}\\
    \lstick[wires=4]{Data} & \targ{} &\qw  & \qw &\qw &\qw & \qw &\qw \\
    & \qw & \targ{} & \ctrl{} &\qw &\qw & \qw &\qw\\
    & \qw & \qw & \qw &  \ctrl{} & \qw& \qw &\qw\\
    & \qw & \qw & \qw &\qw &\qw & \qw  &\qw\\
    \lstick{$|Y\rangle$}& \qw & \qw & \qw &\qw &\targ{} &\ctrl{} & \qw \rstick{$|Y\rangle$}
  \end{quantikz}
  \caption{A circuit-level implementation of a PPM for a Pauli string $P=X\otimes Y\otimes Z\otimes I$. Here $|Y\rangle$ is the $+1$ eigenstate of $Y$. The outcome of the measurement is the joint parity of the measurement outcomes. Unlike the circuit in Fig. \ref{fig:ppm_circuit}, there is no need to explicitly apply $S^{\dagger}$.\label{fig:ppm_circuit_nophase}}
\end{figure}

Therefore, we have seen that one can perform a PPM using a single $|\text{CAT}\rangle$ state and a $+1$ eigenstate of $Y$, as well as CNOT, CZ, and measurement in the $X$-basis. Once the eigenstate of $Y$ is prepared at the beginning of the computation, if the $|\text{CAT}\rangle$ state can be prepared in constant time, every operation discussed so far can be done in constant time. Next, we explain how the state preparation can be carried out, thus completing the description of our protocol.

Na\"ively, preparation of the $|\text{CAT}\rangle$ state seems to require $\mathcal{O}(d)$ time. A straightforward way to do this is to initialize all the logical qubits in the $|+\rangle$ state on a one-dimensional array and measure the joint parity of $\bar{Z}_{j}\bar{Z}_{j+1}$. While the first step requires constant time, the second step requires $\mathcal{O}(d)$ time if we insist on interacting the surface code patches only via their boundary, using lattice surgery~\cite{horsman2012surface}.

However, for our purposes, it is possible to get away with preparing the $|+\rangle$ state over a long surface code patch; see Fig.~\ref{fig:surface_cat}. Of course, this long surface code patch is not exactly the $|\text{CAT}\rangle$ state. However, they are equivalent up to a Pauli-$X$ measurement along the qubits that connect the neighboring patches. Upon measuring these qubits, one must eventually apply a correction.

\begin{figure}[h]
  \subfloat[]{\includegraphics[width=0.45\textwidth]{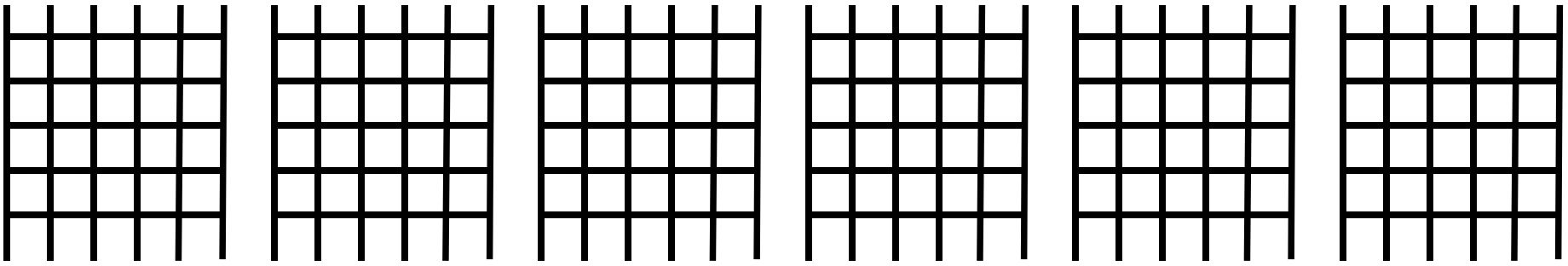}}\\
  \subfloat[]{\includegraphics[width=0.45\textwidth]{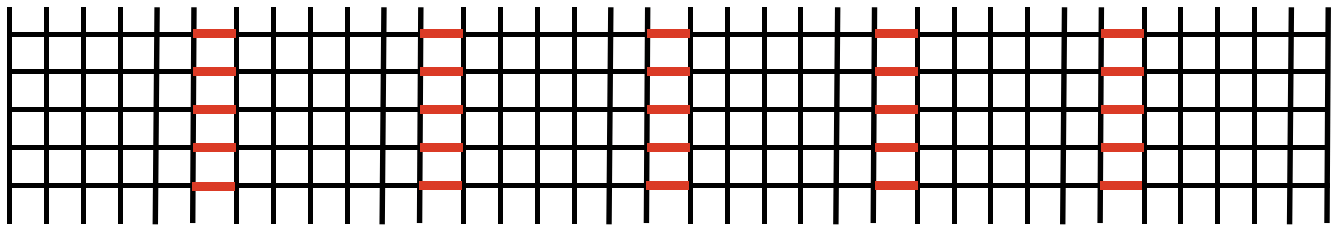}}
  \caption{(a) The data block consists of $N+1$ surface code patches. (b) The ancilla block is initially prepared on a ``long'' surface code patch. By measuring the red qubits and applying a Pauli correction, one can prepare the $|\text{CAT}\rangle$ state over the ancilla block.\label{fig:surface_cat}}
\end{figure}

These partial $X$ measurements by themselves do not seem to make the preparation of the $|\text{CAT}\rangle$ state fault-tolerant. However, used in conjunction with the final $X$-basis measurements on the remaining qubits in the ancilla block, we can measure the Pauli Product operator fault-tolerantly. This can be seen by inspecting the stabilizer generators with support strictly on the ancilla block after applying the gates described above in \ref{sec:fast_ppm}. One can verify that these include all the star operators of the long ancilla block. Moreover, while the errors will undoubtedly spread from the data block, they will spread in a local manner thanks to the transversal nature of the gates used. Therefore, the standard argument in surface code-based error correction~\cite{Dennis2002} implies that one can simply measure the entire ancilla block in the $X$-basis and decode, thus fault-tolerantly extracting the measurement outcome of the logical $X$ operator of the long ancilla block. This measurement outcome is \emph{precisely} the PPM. Since the preparation of the $|\text{CAT}\rangle$ state on the long ancilla block can be performed in $\mathcal{O}(1)$ time, by the discussion in \ref{sec:fast_ppm}, we conclude that it is possible to perform a $\mathcal{O}(1)$-time fault-tolerant PPM.

To summarize, we have shown that an arbitrary fault-tolerant PPM can be performed in constant time. Moreover, all of these operations can be implemented on a planar architecture, involving two folded layers of surface code patches that are superimposed together; see Fig.~\ref{fig:fold}.

\begin{figure}[h]
 \subfloat[]{\includegraphics[width=0.9\columnwidth]{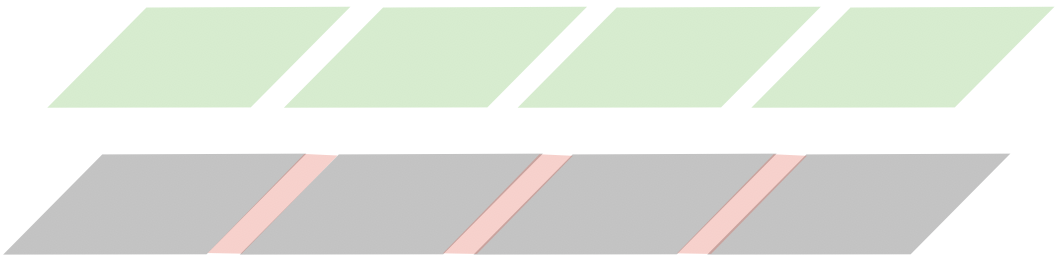}} \\
 \subfloat[]{\includegraphics[width=0.9\columnwidth]{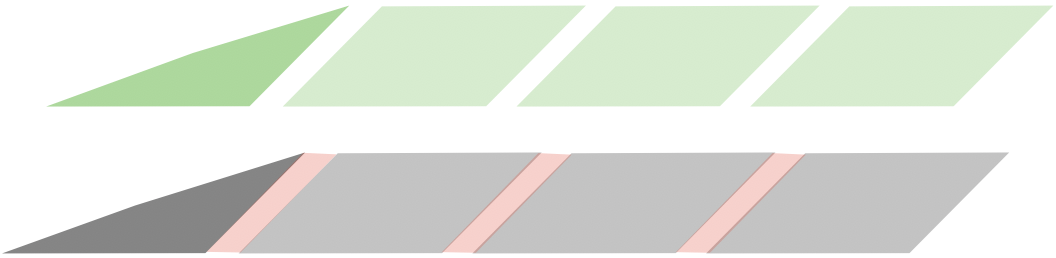}} \\
 \subfloat[]{\includegraphics[width=0.9\columnwidth]{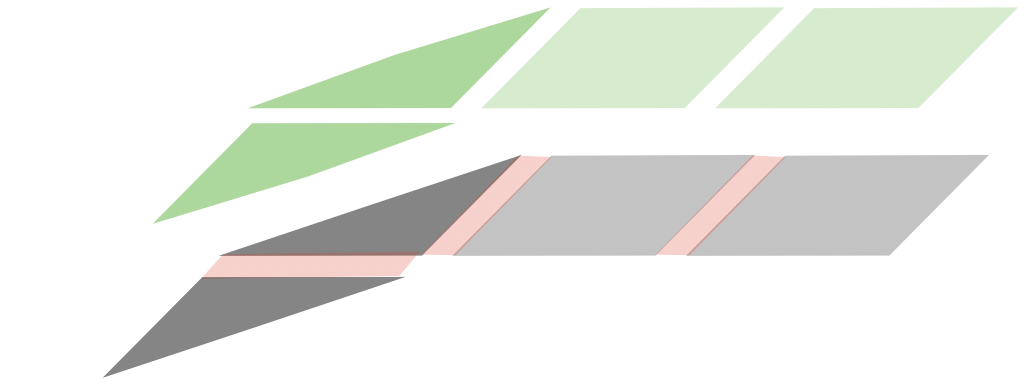}}
 \caption{(a) Two layers of arrays of surface code patches, the top being the data block and the bottom being the ancilla block. As it stands, the logical Hadamard cannot be implemented transversally. (b) and (c) By folding each surface code patch along the diagonal line sequentially, one can obtain a planar architecture in which all the requisite Hadamard and CNOT gates can be implemented transversally~\cite{Moussa2016}. After folding all the surface code patches, we obtain two layers of folded surface code patches. \label{fig:fold}}
\end{figure}

Injection of distilled magic states can be implemented by simply increasing the number of logical qubits in the data and the ancilla blocks. Specifically, suppose we can create $m$ $|T\rangle$ states in a single logical clock cycle. It will suffice to add $m$ extra logical qubits in the data and the ancilla block and store the magic states in the newly allotted qubits in the data block. Since the number of magic state factories is typically much smaller than the number of logical qubits, this additive overhead is expected to be small. Moreover, the magic state factories can be spaced out sufficiently far apart from each other to avoid any routing issues. In the regime studied in this paper, once the number of magic state factories exceeds $\sim 10$, the gain from having an extra magic state factory is expected to be insignificant. Since the smallest number of logical qubits needed exceeds a few thousand, we can take the average spacing between the magic state factories to be at least a few hundred, enough room to have independent magic state factories that do not interfere with each other. Of course, one may opt to use more sophisticated distillation protocols that generate multiple magic states at the same time~\cite{bravyihaahmagic,haah2017magic,haah2017magic_low,haah2018codes}. In those cases, the extra logical qubits will need to be physically close to each other.

One may worry that assigning a fixed location for the magic states will cause a routing problem; as we explain below, this is not the case. Thanks to the fact that we can apply arbitrary PPMs, the location of the magic states does not matter.

\subsection{Compilation and Execution} 
\label{sec:compilation}
We now explain how the newly introduced PPM can be used in practice to compile a quantum algorithm. What we describe below mostly follows from Litinski's observations~\cite{litinski2019game}. Suppose we are given a description of a quantum circuit in terms of Clifford and T-gates. One can, without loss of generality, rewrite this circuit as a sequence of gates of the following form:
\begin{equation}
  \exp\left(i\frac{\pi}{8}P\right), \label{eq:gate_ppr}
\end{equation}
where $P$ is a Pauli operator, followed by Cliffords and single-qubit measurements. Instead of applying the Cliffords directly, one can simply replace the single-qubit measurements by PPMs. Moreover, Litinski showed that a gate in Eq.~\eqref{eq:gate_ppr} can be implemented by consuming a single $|T\rangle$ state and a PPM, followed by a Clifford correction. Therefore, an arbitrary quantum computation can be expressed as a sequence of PPMs on $|T\rangle$ states and a data block, followed by a Clifford correction.

In fact, it is possible to remove the need to apply the Clifford correction entirely. The idea is to dynamically update the gates used in the circuit. Without loss of generality, suppose we have converted the circuit in the form described above, namely, as a sequence of PPMs followed by Clifford corrections of the following form:
\begin{equation}
  \exp\left( i\frac{k\pi}{4}P \right) \label{eq:gate_Clifford_correction}
\end{equation}
for some $k \in \mathbb{Z}$, where $P$ is the Pauli being measured. One can view this as a low-level instruction to the fault-tolerant quantum computer. Given such an instruction, we can execute the circuit on a quantum computer in the following way.

Define the \emph{Clifford frame}, which is a lookup table of size $2N \times (2N+1)$, where $N$ is the number of qubits in the circuit. The Clifford frame is similar to the \emph{tableau} first introduced by Aaronson and Gottesman~\cite{Aaronson2004} and is a compact representation of a Clifford gate. Since the Clifford group generates an automorphism on the Paulis, it suffices to specify their action on the generators, which we choose to be $\bar{X}_n$ and $\bar{Z}_n$ for $n=1,\ldots, N$, each representing the logical $X$- and $Z$-operators of each surface code patch. It is convenient to view the Clifford frame as a table that tells us how to convert each PPM and Clifford correction based on the Clifford corrections that precede them. Specifically, let $\widetilde{X}_n$ and $\widetilde{Z}_n$ be the $n$'th Pauli operator appearing in the instruction. The Clifford frame converts these Paulis in the following way,
\begin{equation}
  \begin{aligned}
    \widetilde{X}_n \to i^{s} \left(\bigotimes_{j=1}^N \bar{X}_j^{m^X_j} \right) \left( \bigotimes_{j=1}^N \bar{Z}_j^{m^Z_j} \right)
  \end{aligned}
\end{equation}
where $m^X_j, m^Z_j \in \{0, 1\}$ and $s\in \{0,1,2,3\}$ and similarly for $\widetilde{Z}_n$. The Clifford frame table contains this data; see Table~\ref{table:clifford_frame} for an example.

\vspace{0.25cm}

\begin{table}[h]
\renewcommand{\arraystretch}{1.4}
  \begin{tabular}{c|c|c|c|c|c|c|c}
    & $m^X_1$ & $m^X_2$ & $m^X_3$ & $m^Z_1$ & $m^Z_2$ & $m^Z_3$ & $s$ \\
    \hline
    $\widetilde{X}_1$ & 1 & 0  &0  &0  &0  &0  & $2$ \\
    \hline
    $\widetilde{X}_2$ & 1 & 1  & 0 & 1 &0  &0 &  $1$\\
    \hline
    $\widetilde{X}_3$ & 0 &0  &1  &0  &0  &0  & $0$\\
    \hline
    $\widetilde{Z}_1$ & 1 &0  &0  & 1  &0  &0  & $3$\\
    \hline
    $\widetilde{Z}_2$ & 0 & 1  &0  &0  &0  &0  & $0$ \\
    \hline
    $\widetilde{Z}_3$ & 0 &0  &0  &0  &0  &1 & $2$
  \end{tabular}
  \caption{An example of a Clifford frame for a three-qubit system.} \label{table:clifford_frame}
\end{table}

When executing a quantum circuit, the instruction will remain as is, but the Clifford frame will change dynamically. In the beginning, the Clifford frame will be initialized such that
\begin{equation}
  \begin{aligned}
    \widetilde{X}_j \to \bar{X}_j, \\
    \widetilde{Z}_j \to \bar{Z}_j.
  \end{aligned}
\end{equation}
In this very first step, we will have a PPM, which can be implemented using either the procedure described above or the one in Ref.~\cite{litinski2019game}. Then, suppose the result of the PPM indicates that a Clifford update must be carried out. Instead of applying this Clifford, say $C$, we can update all the remaining PPMs and Clifford corrections by updating the Pauli that underlies them as
\begin{equation}
  P \to C^{\dagger}P C.
\end{equation}
However, the instruction set for practical quantum algorithms will be lower bounded by the number of T-gates used, which is often on the order of at least $n_T=10^{10}$ even for state-of-the-art algorithms~\cite{DoubleFactorized_MSFT,Lee2020}. Updating every gate appearing afterwards will lead to a number of updates of order $\mathcal{O}(n_T^2) \approx 10^{20}$, which is a significant amount of computation even for a modern computer. Therefore, this is an unwieldy approach. Instead, it is better to update the table such as in Table~\ref{table:clifford_frame}. This amounts to updating $\widetilde{X}_n \to C^{\dagger}\widetilde{X}_n C$ and $\widetilde{Z}_n \to C^{\dagger} \widetilde{Z}_n C$ for every $n$.

Happily, such a calculation is extremely simple. For instance, given a $\widetilde{X}_n$, if it commutes with the Pauli operator $P$ appearing in the correcting Clifford defined in Eq.~\eqref{eq:gate_Clifford_correction}, no update is needed. Checking the commutation relation only requires $\mathcal{O}(1)$ number of bitwise operations and parity calculations for $N$-bit strings. If they do not commute, $\widetilde{X}_n$ is updated as $\widetilde{X}_n P$ up to a phase. Both the phase and the updated Pauli can also be computed using $\mathcal{O}(1)$ number of bitwise operations and parity calculations for $N$-bit strings. Thus, the amount of classical computation scales as $\mathcal{O}(N)$ parity calculations and bitwise-operations over $N$-bit strings.

Let us emphasize that this calculation can be trivially parallelized to $\mathcal{O}(1)$ parity calculations and bit-wise operations distributed over $2N$ processors. Moreover, modern CPUs can perform such operations over thousands of bits on the order of \;nanoseconds. Therefore, we do not expect the update of the Clifford frame to be a significant bottleneck for the computation.

\subsection{Speedups} 
\label{sec:speedups}
At first, one may think that a sequence of PPMs can be implemented in $\mathcal{O}(1)$ time, independent of whether they commute with each other. While we cannot completely discount this possibility, we will demand the sequence of PPMs to satisfy an important constraint. If we were to implement $k$ PPMs in a single logical clock cycle, we demand all the Paulis that underlie the PPMs to commute with each other.

If they do not, we encounter the following difficulty. Decoding a logical qubit requires syndrome measurement outcomes in a $d\times d\times d$ block, where the last factor of $d$ comes from time~\cite{Dennis2002}. Upon collecting all of these measurement outcomes and decoding, one must apply a correction operation. If certain layers of the $d\times d\times d$ block undergo the transversal gates described above, the measurement outcome of the ancilla block may, depending on the outcome of the decoding, have to be retroactively changed. Doing so would necessitate a change in the PPM outcome that occurred in the past, which would in turn signal that a Clifford correction in Eq.~\eqref{eq:gate_Clifford_correction} was applied incorrectly, resulting in an incorrectly-updated Clifford frame. The incorrect Clifford frame may have triggered an incorrect PPM, which may have already happened.

Fortunately, such a problem is moot if the Paulis that underlie the $k$ PPMs commute with each other. In that case, the Clifford corrections also commute with the PPMs; see Eq.~\eqref{eq:gate_Clifford_correction}. Therefore, the Clifford corrections do not change the PPMs within that block. One can simply perform all the PPMs and then update the Clifford frame before we move to the next $d\times d\times d$ block.

With this constraint in mind, the speedup we can achieve using this approach is limited by the ratio of T-count and T-depth. In our setup we can optimistically expect up to a $15$-fold speedup with a negligible change in the footprint of the device.

Another constraint is the code distance. At best, one can apply $\mathcal{O}(d)$ PPMs in a single logical clock cycle using our method. The estimated code distance in Section \ref{sec:overhead} ranged from $24$ to $44$, suggesting that the best-case scenario of a $15$-fold speedup may be attainable. However, note that whether this is possible or not depends on many microscopic details, such as the timescale needed for different physical operations. Moreover, we would like to emphasize that the threshold and the sub-threshold behavior of the code may change when we implement the transversal gates. The exact extent to which we can speed up the computation using our approach can only be determined by carefully inspecting all of these different factors. These studies are beyond the scope of this paper and are left for future work.

Keeping these caveats in mind, we can estimate the optimistic computation time for a single phase estimation for the molecules described below to be on the order of $1\sim 4$ hours for the cc-pVDZ basis set and $35\sim 90$ hours for the cc-pVTZ basis set.

\section{Proof for logarithmic depth ``Gizens" rotations}
\label{sec:gizens_proof}
\begin{thm}
  Let $N=2^n$. Define a $\ell$-layer circuit as
  \begin{equation}
    \tilde{V} = \prod_{j=1}^{\ell} \prod_{k=0}^{2^j - 2} \tilde{V}_{k2^{n-j},(k+1)2^{n-j}}(\theta_{jk}).
  \end{equation}
  Then the angles $\theta_{jk}$ can be chosen such that $\tilde{V}\gamma_0 = \gamma_{\vec{u}}$.
\end{thm}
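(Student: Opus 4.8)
# Proof Proposal

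The plan is to follow the recursive structure of the Grover–Rudolph state-preparation argument, adapted to the Majorana setting. First I would set up notation: write $\vec{u} = (u_0, \dots, u_{N-1})$ with $\sum_j u_j^2 = 1$ (assuming $\gamma_{\vec u}$ is already normalized, as is needed for it to be conjugate to $\gamma_0$ by an orthogonal rotation), and for a dyadic interval $I_{j,k} = \{k2^{n-j}, \dots, (k+1)2^{n-j}-1\}$ define the partial norm $s_{j,k} = \bigl(\sum_{p \in I_{j,k}} u_p^2\bigr)^{1/2}$. The key algebraic fact I would establish up front is that $\tilde V_{p,q}(\theta)$, under Jordan–Wigner, acts on the span of Majoranas by a planar rotation in the $(\gamma_p, \gamma_q)$ plane: conjugation sends $\gamma_p \mapsto \cos(2\theta)\gamma_p + \sin(2\theta)\gamma_q$ and $\gamma_q \mapsto -\sin(2\theta)\gamma_p + \cos(2\theta)\gamma_q$, fixing all other $\gamma_t$. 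This follows because $\gamma_p\gamma_q$ generates an $\mathfrak{so}(2)$ inside the quadratic Majorana algebra, and the $Z$-strings in the Gizens form are precisely what make the two Pauli terms commute and combine into this single generator; I would verify this by a direct commutator computation (or cite that it is the standard Bogoliubov/Givens action).

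Next I would argue by downward induction on the layer index, or equivalently by a recursive decomposition on dyadic intervals. The claim to prove by induction is: for each level $j$, after applying layers $1,\dots,j$ (read in the appropriate order), the transformed $\gamma_0$ equals $\sum_k s_{j,k}\,\gamma_{k2^{n-j}}$ — that is, the amplitude mass on each dyadic block $I_{j,k}$ has been collapsed onto its leftmost index with weight equal to the block's partial norm. The base case $j=0$ is $\gamma_0 = s_{0,0}\gamma_0$ with $s_{0,0}=1$. For the inductive step, each block $I_{j,k}$ splits into its left and right halves $I_{j+1,2k}$ and $I_{j+1,2k+1}$; the Gizens rotation $\tilde V_{2k\cdot 2^{n-j-1},\,(2k+1)\cdot 2^{n-j-1}}(\theta_{j+1,\,2k})$ acts only on the plane spanned by the two leftmost indices of these halves and, by the planar-rotation fact above, can be tuned via $\cos(2\theta_{j+1,2k}) = s_{j+1,2k}/s_{j,k}$, $\sin(2\theta_{j+1,2k}) = s_{j+1,2k+1}/s_{j,k}$ (well-defined since $s_{j+1,2k}^2 + s_{j+1,2k+1}^2 = s_{j,k}^2$) to redistribute the weight $s_{j,k}$ on index $k2^{n-j}$ into weights $s_{j+1,2k}$ and $s_{j+1,2k+1}$ on the two child-block leftmost indices. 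Since distinct blocks at level $j$ have disjoint support and the rotations within layer $j{+}1$ act on disjoint pairs, they commute and can be applied in parallel; after $n$ levels every block is a singleton and the transformed $\gamma_0$ equals $\sum_p u_p \gamma_p = \gamma_{\vec u}$, possibly up to fixing signs of the $u_p$ by allowing $\theta$ in $[0,2\pi)$ or inserting a trivial sign correction.

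The main obstacle I anticipate is bookkeeping the ordering and the $Z$-strings rather than anything conceptually deep. Two points need care. (i) The Gizens operators at \emph{different} levels overlap in their support (a later, finer rotation acts on indices interior to an earlier block's string), so they do \emph{not} all commute; one must check that applying the layers in the stated order $j=1,2,\dots,\ell$ is consistent with the inductive claim — i.e., that when layer $j{+}1$ acts, the only relevant component of the current operator is the $\gamma_{k2^{n-j}}$ term, and the $Z$-strings of the layer-$(j{+}1)$ rotation act trivially on a single bare Majorana $\gamma_{k2^{n-j}}$ sitting at the left endpoint (true because the string runs strictly between the two endpoints and $\gamma$ at an endpoint commutes with it appropriately under JW, up to the same sign conventions that already appear in the single-layer Givens case of~\cite{DoubleFactorized_MSFT}). (ii) One should confirm the depth/count claim as stated in the figure caption — each $\tilde V_{p,q}$ is a product of two commuting PPRs, hence rotation-depth $1$ — and that layer $j$ contains $2^{j}-1$ such rotations on disjoint pairs, giving total rotation count $\sum_{j=1}^{n}(2^j-1) = N - 1 - n$ wait, $\sum_{j=1}^n (2^j - 1) = (2^{n+1}-2) - n = 2N - 2 - n$; here I would instead note the construction uses exactly $N-1$ Gizens rotations in a binary tree of depth $n = \log_2 N$, matching the lower bound of $N-1$ on the count while reducing depth from $N-1$ to $\mathcal{O}(\log N)$. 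I would close by remarking that nothing in the argument used $N$ being a power of two except the clean dyadic recursion; for general $N$ one uses an unbalanced binary tree with the same partial-norm angle assignment.
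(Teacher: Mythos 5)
Your proposal is correct and takes essentially the same route as the paper's proof: a Grover--Rudolph-style induction over layers, with the inductive invariant that after $j$ layers the image of $\gamma_0$ is supported on the leftmost index of each dyadic block with coefficient equal to that block's partial norm, and with each Gizens angle chosen so the rotation splits a block's weight into its two children. You fill in some details the paper leaves implicit (the planar-rotation action of $\tilde{V}_{p,q}$ on the span of Majoranas, the cross-layer ordering/commutation bookkeeping, and sign conventions), but the core argument is the same.
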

\begin{proof}
  Proceed by induction. After $s$ layers, assume that the circuit applies
  \begin{equation}
    \gamma_0 \rightarrow \sum_{i=0}^{2^s -1} c_{i,s} \gamma_{i 2^{n-s}},
  \end{equation}
  where
  \begin{equation}
    c_{i,s} = \sqrt{\sum_{j=i2^{n-s}}^{(i+1)2^{n-s}-1} u^2_j}.
  \end{equation}
  Then the action of layer $(s+1)$ is to rotate
  \begin{align}
    \gamma_{i2^{n-s}} &\rightarrow \sin(\theta_{is})\gamma_{i2^{n-s}}+\cos(\theta_{is})\gamma_{i2^{n-s} + 2^{n-s-1}} \nonumber \\
    &= \sin(\theta_{is})\gamma_{(2i)2^{n-(s+1)}}+\cos(\theta_{is})\gamma_{(2i+1)2^{n-(s+1)}}.
  \end{align}
  Provided that we choose $\theta_{is}$ such that $\arcsin(\theta_{is}) = \frac{c_{i,s+1}}{c_{i,s}}$, we therefore have that after $(s+1)$ layers,
  \begin{equation}
    \gamma_0 \rightarrow \sum_{i=0}^{2^{s+1}-1} c_{i,s+1} \gamma_{i2^{n-(s+1)}}.
  \end{equation}
  For the base case $s=1$, we require that
  \begin{equation}
    \gamma_0 \rightarrow c_{0,1} \gamma_0 + c_{1,1} \gamma_{2^{n-1}}
  \end{equation}
  which is trivially achieved by the single Gizens rotation $\tilde{V}_{0, 2^{n-1}}(\theta)$ with $\theta = \arcsin{(c_{0,1})}$.
  After $n$ layers, the coefficients are
  \begin{equation}
    c_{i,n} = \sqrt{\sum_{j=i2^{n-n}}^{(i+1)2^{n-n}-1} u_j^2} = \sqrt{\sum_{j=i}^{i} u_j^2} = u_i^2
  \end{equation}
  and we have prepared $\gamma_{\vec{u}}$ as intended.
\end{proof}

\end{document}